\providecommand{\remove}[1]{}
\providecommand{\ie}{\emph{i.e.}{} }
\providecommand{\eg}{\emph{e.g.}{} }
\providecommand{\supp}{\mathrm{supp}}
\providecommand{\Exp}{\mathbb{E}}
\providecommand{\dans}{\rightarrow}
\providecommand{\eps}{\varepsilon}
\providecommand{\zo}{\{0, 1\}}
\providecommand{\drawn}[1]{\stackrel{#1}{\leftarrow}}
\providecommand{\drawnr}{\drawn{R}}
\renewcommand{\Pr}{\mathop{\mathrm{Pr}}}
\def\FullBox{\hbox{\vrule width 8pt height 8pt depth 0pt}}
\def\qed{\ifmmode\qquad\FullBox\else{\unskip\nobreak\hfil
\penalty50\hskip1em\null\nobreak\hfil\FullBox
\parfillskip=0pt\finalhyphendemerits=0\endgraf}\fi}
\newcommand{\E}{{\mathbf E}}
\newcommand{\pr}{\mathbf{Pr}}
\newcommand{\pub}{\mathsf{pub}}
\newcommand{\err}{\mathsf{err}}
\newcommand{\Rand}{\mathrm{R}}
\newcommand{\Dist}{\mathrm{D}}
\newcommand{\CC}{\mathrm{CC}}
\newcommand{\calF}{\mathcal{F}}
\newcommand{\calC}{C}
\newcommand{\calD}{\mathcal{D}}
\newcommand{\D}{\mathcal{D}}
\newcommand{\calH}{\mathcal{H}}
\newcommand{\calh}{\mathfrak{h}}
\newcommand{\DRDim}{\mathrm{DRDim}}
\newcommand{\PRDim}{\mathrm{PRDim}}
\newcommand{\VCDim}{\mathrm{VC}}
\newcommand{\SCDP}{\mathrm{SCDP}}
\newcommand{\LDim}{\mathrm{LDim}}
\newcommand{\Ldim}{\mathrm{LDim}}
\newcommand{\Eval}{\mathsf{Eval}}
\DeclareMathOperator*{\argmax}{argmax}
\newcommand{\HS}{\mathsf{HS}}
\newcommand{\BALL}{\mathsf{BALL}}
\newcommand{\BOX}{\mathsf{BOX}}
\newcommand{\Thr}{\mathsf{Thr}}
\newcommand{\Point}{\mathsf{Point}}
\newcommand{\Ent}{\mathrm{H}}
\newcommand{\GT}{\mathsf{GT}}
\newcommand{\AI}{\mathsf{AugIndex}}
\newcommand{\Info}{\mathbf{I}}
\newcommand{\R}{{\mathbb R}}
\newcommand{\Z}{{\mathbb Z}}
\newcommand{\Line}{\mathsf{Line}}
\newcommand{\Div}{\mathrm{D}}
\newcommand{\bfi}{\mathbf{i}}
\newcommand{\bfa}{\mathbf{a}}
\newcommand{\bfb}{\mathbf{b}}
\newcommand{\bfx}{\mathbf{x}}
\newcommand{\bfy}{\mathbf{y}}
\newcommand{\bfz}{\mathbf{z}}
\newcommand{\bfm}{\mathbf{m}}
\newcommand{\al}{\alpha}
\newcommand{\be}{\beta}
\title{Sample Complexity Bounds on Differentially Private Learning via Communication
  Complexity}
\author{Vitaly Feldman \thanks{IBM Research - Almaden. E-mail:
    \texttt{vitaly@post.harvard.edu}. Part of this work done
    while visiting LIAFA, Universit\'{e} Paris 7.}%\Email{abc@sample.com}\\
% \addr Address 1
  \and David Xiao\thanks{CNRS, Universit\'{e} Paris 7. E-mail:
    \texttt{dxiao@liafa.univ-paris-diderot.fr}. Part of this work done
    while visiting Harvard's Center for Research on Computation and
    Society (CRCS).} }
\begin{document}

\maketitle

\begin{abstract}
In this work we analyze the sample complexity of classification by differentially private algorithms. Differential privacy is a strong and well-studied notion of privacy introduced by \cite{DworkMNS:06} that ensures that the output of an algorithm leaks little information about the data point provided by any of the participating individuals. Sample complexity of private PAC and agnostic learning was studied in a number of prior works starting with \citep{KLNRS:11}. However, a number of basic questions still remain open \citep{BeimelKN:10,ChaudhuriHsu:11,BeimelNS:13,BeimelNS:13approx}, most notably whether learning with privacy requires more samples than learning without privacy.

We show that the sample complexity of learning with (pure) differential privacy can be arbitrarily higher than the sample complexity of learning without the privacy constraint or the sample complexity of learning with approximate differential privacy. Our second contribution and the main tool is an equivalence between the sample complexity of (pure) differentially private learning of a concept class $C$ (or $\SCDP(C)$) and the randomized one-way communication complexity of the evaluation problem for concepts from $C$. Using this equivalence we prove the following bounds:
\begin{itemize}
\item $\SCDP(C) = \Omega(\LDim(C))$, where $\LDim(C)$ is the Littlestone's dimension characterizing the number of mistakes in the online-mistake-bound learning model \citep{Littlestone:87}. Known bounds on $\LDim(C)$ then imply that $\SCDP(C)$ can be much higher than the VC-dimension of $C$.
\item For any $t$, there exists a class $C$ such that $\LDim(C)=2$ but $\SCDP(C) \geq t$.
\item For any $t$, there exists a class $C$ such that the sample complexity of (pure) $\alpha$-differentially private PAC learning is $\Omega(t/\alpha)$ but the sample complexity of the approximate $(\alpha,\beta)$-differentially private PAC learning is $O(\log(1/\beta)/\alpha)$. This resolves an open problem from \citep{BeimelNS:13approx}.
\end{itemize}
\remove{
We also obtain simpler proofs for a number of known related results. Our equivalence builds on a characterization of sample complexity by \citet{BeimelNS:13} and our bounds rely on a number of known results from communication complexity.
}
\footnotetext{Preliminary version of this work has appeared in Conference on Learning Theory (COLT), 2014}
\end{abstract}

% \begin{keywords}
% Sample complexity, differentially-private learning, covering number, lower bounds, one-way communication complexity
% \end{keywords}

\section{Introduction}
\label{sec:intro}
In machine learning tasks, the training data often consists of information collected
from individuals. This data can be highly sensitive, for example in the case
of medical or financial information, and therefore privacy-preserving data analysis is becoming
an increasingly important area of study in machine learning, data mining and statistics \citep{DworkSmith:09,SarwateC:13,DworkRoth:14}.

In this work we focus on the task of learning to classify from labeled examples. Two standard and closely related models of this task are PAC learning \citep{Valiant:84} and agnostic \citep{Haussler:92,KearnsSS:94} learning.  In the PAC learning model the algorithm is given random examples in which each point is sampled i.i.d.~from some unknown distribution over the domain and is labeled by an unknown function from a set of functions $C$ (called concept class). In the agnostic learning model the algorithm is given examples sampled i.i.d.~from an arbitrary (and unknown) distribution over labeled points. The goal of the learning algorithm in both models is to output a hypothesis whose prediction error on the distribution from which examples are sampled is not higher (up to an additive $\eps$) than the prediction error of the best function in $C$ (which is $0$ in the PAC model). See \sectionref{sec:learn-model} for formal definitions.

We rely on the well-studied differential privacy model of privacy. Differential privacy gives a formal semantic guarantee of privacy, saying intuitively that no single individual's data has too large of an effect on the output of the algorithm, and therefore observing the output of the algorithm does not leak much
information about an individual's private data \citep{DworkMNS:06} (see \sectionref{sec:prelims-dp} for the formal
definition). The downside of this desirable guarantee is that for some problems achieving it has an additional cost: both in terms of the number of examples, or sample complexity, and computation.

The cost of differential privacy in PAC and agnostic learning was first studied by \cite{KLNRS:11}. They showed that the sample complexity\footnote{For now we ignore the dependence on other parameters and consider them to be small constants.} of differentially privately learning a concept class $C$ over domain $X$, denoted by $\SCDP(C)$, is $O(\log(|C|))$ and left open the natural question of whether $\SCDP(C)$ is different from the VC dimension of $C$ which, famously, characterizes the sample complexity of learning $C$ (without privacy constraints).  By Sauer's lemma, $\log(|C|) = O(\VCDim(C) \cdot \log(|X|))$ and therefore the multiplicative gap between these two measures can be as large as $\log(|X|)$.

Subsequently, \cite{BeimelKN:10} showed that there exists a large concept class, specifically single points, for which the sample complexity of learning with privacy is a constant. They also show that differentially private {\em proper} learning (the output hypothesis has to be from $C$) of single points $\Point_b$ and threshold functions $\Thr_b$ on the set $I_b=\{0,1,\ldots,2^b-1\}$ requires $\Omega(b)$ samples. These results demonstrate that the sample complexity can be lower than $O(\log(|C|))$ and also that lower bounds on the sample complexity of proper learning do not necessarily apply to non-proper learning that we consider here. A similar lower bound on proper learning of thresholds on an interval was given by \cite{ChaudhuriHsu:11} in a continuous setting where the sample complexity becomes infinite. They also showed that the sample complexity can be reduced to essentially $\VCDim(C)$ by either adding distributional assumptions or by requiring only the privacy of the labels.

The upper bound of \cite{BeimelKN:10} is based on an observation from \citep{KLNRS:11} that if there exists a class of functions $H$ such that for every $f \in C$ and every distribution $\calD$ over the domain, there exists $h \in H$ such that $\Pr_{x\sim \calD}[f(x) \neq h(x)] \leq \eps$ then the sample complexity of  differentially private PAC learning with error $2\eps$ can be reduced to $O(\log(|H|)/\eps)$. They refer to such $H$ as an $\eps$-representation of $C$, and define the (deterministic) $\eps$-representation dimension of $C$, denoted as $\DRDim_{\eps}(C)$, as $\log(|H|)$ for the smallest $H$ that $\eps$-represents $C$. We note that this natural notion can be seen as a distribution-independent version of the standard notion of $\eps$-covering of $C$ in which the distribution over the domain is fixed \cite[\eg][]{BenedekI:91}.

\cite{BeimelNS:13} then defined a probabilistic relaxation of $\eps$-representation defined as follows. A distribution $\calH$ over sets of boolean functions on $X$ is said to $(\eps,\delta)$-probabilistically
represent $C$ if for every $f \in C$ and distribution $\calD$ over $X$, with probability $1-\delta$ over the choice of
$H \drawnr \calH$, there exists $h \in H$ such that $\Pr_{x \sim \calD}[h(x) \neq f(x)] \leq \eps$. The $(\eps,\delta)$-probabilistic representation dimension $\PRDim_{\eps,\delta}(C)$ is the minimal $\max_{H \in  \supp(\calH)} \log |H|$, where the minimum is over all $\calH$ that $(\eps,\delta)$-probabilistically represent $C$.
\remove{Rather surprisingly\footnote{
While many other sample complexity bounds in learning theory rely on covering numbers their lower bound does not involve the standard step of constructing a large packing implied by covering. It is unclear to us if a packing implies a covering of the same size in this distribution-independent setting (as it does in the case of metric covering).}
} \cite{BeimelNS:13} demonstrated that $\PRDim_{\eps,\delta}(C)$ characterizes the sample complexity of differentially private PAC learning. In addition, they show that $\DRDim$ can upper-bounded by $\PRDim$ as $\DRDim(C) =O(\PRDim(C) + \log\log(|X|))$, where we omit $\eps$ and $\delta$ when they are equal to $1/4$.

\cite{BeimelNS:13approx} consider PAC learning with {\em approximate} $(\alpha,\beta)$-differential privacy where the privacy guarantee holds with probability $1-\beta$ (the basic notion is also referred to as {\em pure} to distinguish it from the approximate version). They show that $\Thr_b$ can be PAC learned using $O(16^{\log^*(b)} \cdot \log(1/\beta))$ samples ($\alpha$ is a constant as before). Their algorithm is proper so this separates the sample complexity of pure differentially private proper PAC learning from the approximate version. This work leaves open the question of whether such a separation can be proved for (non-proper) PAC learning.
\subsection{Our results}
In this paper we resolve the open problems described above.  In the process we also establish a new relation between $\SCDP$ and Littlestone's dimension, a well-studied measure of sample complexity of online learning \citep{Littlestone:87} (see \sectionref{sec:littlestone-def} for the definition).
The main ingredient of our work is a characterization of $\DRDim$ and $\PRDim$ in terms of randomized one-way
communication complexity of associated evaluation problems \citep{KremerNR:99}.
In such a problem Alice is given as input a function $f \in C$ and Bob is given an
input $x \in X$. Alice sends a single message to Bob, and Bob's goal is
to compute $f(x)$. The question is how many bits Alice must
communicate to Bob in order for Bob to be able to compute $f(x)$
correctly, with probability at least $2/3$ over the randomness used by Alice and Bob.

In the standard or ``private-coin'' version of this model, Alice and Bob each
have their own source of random coins. The minimal number of bits
needed to solve the problem for all $f\in C$ and $x \in X$ is denoted by $\Rand^{\rightarrow}(C)$.
In the stronger  ``public coin'' version of the model, Alice and Bob share the access to the same source of
random coins. The minimal number of bits needed to evaluate $C$ (with probability at least $2/3$) in this setting
is denoted by $\Rand^{\rightarrow,\pub}(C)$. See \sectionref{sec:comm-comp} for formal definitions.

We show that these communication problems are equivalent to
deterministic and probabilistic representation dimensions of $C$ and, in particular, $\SCDP(C) = \theta(\Rand^{\rightarrow,\pub}(C))$ (for clarity we omit the accuracy and confidence parameters, see \theoremref{thr:prdim-ccpub} and \theoremref{thr:drdim-cc} for details).
\begin{theorem}
\label{thm:main-eq-intro}
  $\DRDim(C) = \Theta(\Rand^{\rightarrow}(C))$ and
  $\PRDim(C) = \Theta(\Rand^{\rightarrow,\pub}(C))$.
\end{theorem}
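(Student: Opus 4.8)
The plan is to prove each of the two equalities as two matching inequalities, using a common dictionary between one-way protocols and representing families together with some standard preprocessing. The preprocessing: in the public-coin model absorb all private coins into the shared string, so that given the string both players act deterministically; in the private-coin model replace Bob's acceptance probability on each message $m$ by its rounding to $\{0,1\}$. Either way, Bob's answer becomes a boolean function $h_m$ of the message $m\in\zo^c$ and his input $x$, and the decoded functions $\{h_m\}_{m\in\zo^c}$ (for a fixed shared string, in the public-coin case) form a family of size at most $2^c$. Conversely a representing family of size $2^d$ will be turned into a protocol in which Alice sends the $d$-bit index of a suitably chosen $h$ from the family and Bob outputs $h(x)$. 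To reconcile the various constant accuracy/confidence parameters I will use two routine amplification facts: a one-way protocol of success probability $2/3$ amplifies to any constant success by running $O(1)$ independent copies and taking a majority (multiplying communication by a constant and replacing the $h_m$ by majorities of $O(1)$ of them), and an $(\eps,\delta)$-probabilistic representation amplifies to $(\eps,\delta')$ for any constant $\delta'$ by letting $\calH'$ output the union of $O(1)$ independent draws $H\drawnr\calH$ (adding only $O(1)$ to $\max_H\log|H|$).

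The ``protocol $\Rightarrow$ representation'' inequalities are averaging arguments; start from a protocol amplified to some small constant error $\eta$. In the private-coin case, writing $q_m(x)=\Pr[\text{Bob outputs }1\mid m,x]$ and $h_m=\mathbf{1}[q_m\ge\tfrac12]$, per-point correctness is $\Exp_{m\drawnr A_f}[\,|q_m(x)-f(x)|\,]\le\eta$ for every $x$ (here $A_f$ is Alice's message distribution on $f$), so Markov's inequality gives $\Pr_{m\drawnr A_f}[h_m(x)\ne f(x)]\le 2\eta$ for every $x$, and averaging this over $x\drawnr\calD$ yields some $h_m$ that is $2\eta$-close to $f$ under $\calD$; hence $\{h_m\}$ is a $2\eta$-representation of $C$ and $\DRDim(C)=O(\Rand^{\rightarrow}(C))$. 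In the public-coin case the shared string $s$ plays the role of the draw $H\drawnr\calH$: for $h^f_s$ the function Bob decodes on input $f$ under string $s$ one has $\Exp_s[\Pr_{x\drawnr\calD}[h^f_s(x)\ne f(x)]]\le\eta$, so by Markov with probability $\ge 1/2$ over $s$ the family $H_s$ contains a $2\eta$-close function, a probabilistic representation with constant parameters, whence $\PRDim(C)=O(\Rand^{\rightarrow,\pub}(C))$.

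The reverse, ``representation $\Rightarrow$ protocol'' inequalities use the minimax theorem and carry the real content. For $\DRDim$: fix an $\eps$-representation $H$ of $C$ with $\log|H|=d$ and fix $f\in C$, and consider the zero-sum game in which the minimizer plays $h\in H$, the maximizer plays $x\in X$, and the loss is $\mathbf{1}[h(x)\ne f(x)]$; a mixed maximizer strategy is a distribution $\calD$ on $X$, scoring $\Pr_{x\drawnr\calD}[h(x)\ne f(x)]$ against a pure $h$. By the representation property the value is $\le\eps$, so minimax produces a distribution $p_f$ on $H$ with $\Pr_{h\drawnr p_f}[h(x)\ne f(x)]\le\eps$ for \emph{every} $x$; Alice then sends the index of $h\drawnr p_f$, a protocol of communication $\le d+O(1)$ and error $\le\eps$, giving $\Rand^{\rightarrow}(C)=O(\DRDim(C))$. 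For $\PRDim$ the naive protocol --- sample $H\drawnr\calH$ with the shared coins, then have Alice name a good $h\in H$ --- does not work, because the probabilistic-representation guarantee has the form ``$\forall\calD:\ \Pr_{H\drawnr\calH}[\exists h\in H\text{ close to }f\text{ under }\calD]\ge 1-\delta$'', whereas this protocol would need the quantifiers interchanged, and no single sampled $H$ need be good for all $\calD$ at once. The fix is to let Alice's strategy be a whole collection $\{q^f_H\}_{H\in\supp(\calH)}$, one distribution over $H$ for each $H$ in the support, and to run minimax on the enlarged game with payoff $\Exp_{H\drawnr\calH}\Exp_{h\drawnr q^f_H}[\Pr_{x\drawnr\calD}[h(x)\ne f(x)]]$, which is still bilinear over two convex compact sets. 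For fixed $\calD$ the minimizer's best response puts each $q^f_H$ on $\argmin_{h\in H}\Pr_{x\drawnr\calD}[h(x)\ne f(x)]$, so the value is $\Exp_{H\drawnr\calH}[\min_{h\in H}\Pr_{x\drawnr\calD}[h(x)\ne f(x)]]\le(1-\delta)\eps+\delta\le\eps+\delta$ by the representation property for that $\calD$; after shrinking $\delta$ so $\eps+\delta<1/3$ (cheap, as $\eps<1/3$ already), minimax yields $\{q^f_H\}_H$ with $\Exp_{H\drawnr\calH}\Exp_{h\drawnr q^f_H}[\mathbf{1}[h(x)\ne f(x)]]<1/3$ for every $x$, so the protocol in which the shared coins draw $H\drawnr\calH$, Alice sends the index of $h\drawnr q^f_H$, and Bob outputs $h(x)$ has error $<1/3$ and communication $\lceil\log|H|\rceil=O(\PRDim(C))$. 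The one step I expect to need genuine care is precisely this quantifier mismatch in the public-coin lower bound --- recognizing that Alice's real resource is a distribution-over-$H$-valued strategy, so that minimax in the corresponding larger game turns the per-$\calD$ probabilistic guarantee into a per-$x$ worst-case one; the averaging arguments, the minimax invocations, and the constant bookkeeping are otherwise routine, and I would leave the exact parameter dependence to \theoremref{thr:prdim-ccpub} and \theoremref{thr:drdim-cc}.
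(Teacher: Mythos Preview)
Your proposal is correct. For three of the four inequalities---both directions of the $\DRDim$ equivalence, and the direction $\PRDim(C)\le\Rand^{\rightarrow,\pub}(C)$---your argument is essentially the paper's, with only the order of the two averaging steps swapped in places.

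The one genuine difference is in the direction $\Rand^{\rightarrow,\pub}(C)\le O(\PRDim(C))$. You resolve the quantifier mismatch by applying minimax \emph{per $f$} on the enlarged game where Alice's mixed strategy is an entire family $\{q^f_H\}_{H\in\supp(\calH)}$; minimax then hands you a single public-coin protocol (public coins draw $H$, Alice samples $h\drawnr q^f_H$ and sends its index) that is correct for every $x$. The paper instead goes through Yao's principle: it fixes an arbitrary input distribution $\mu$ over pairs $(f,x)$, observes that Alice then \emph{knows} the conditional distribution $\calD_f$ of Bob's input, and so after the public coins fix $H$ she deterministically sends the index of whichever $h\in H$ is $\eps$-close to $f$ under $\calD_f$ (which exists with probability $\ge 1-\delta$ over $H$). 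This gives $\Dist^{\rightarrow}_{\eps+\delta}(\Eval_C;\mu)\le\PRDim_{\eps,\delta}(C)$ for every $\mu$, and Yao converts this to the public-coin bound. Both arguments are valid and yield the same constants; the paper's route is the standard communication-complexity manoeuvre and cleanly avoids any question about the minimizer's strategy space, whereas your route constructs the protocol directly but, when $\supp(\calH)$ is infinite, needs a Sion-type minimax on an infinite product of simplices (harmless, but worth a line). Note that your enlarged-game minimax is in effect a per-$f$ unpacking of Yao's principle, so the two arguments are cousins rather than strangers.
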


The evaluation of threshold functions on a (discretized) interval $I_b$ corresponds to the well-studied ``greater than" function in communication complexity denoted as $\GT$. $\GT_b(x,y) = 1$ if and only if $x > y$,
where $x,y \in \zo^b$ are viewed as binary representations of integers. It is known
that $\Rand^{\rightarrow,\pub}(\GT_b) = \Omega(b)$ \citep{MiltersenNSW:98}. By combining this lower bound with
\theoremref{thm:main-eq-intro} we obtain a class whose $\VCDim$ dimension is 1 yet it requires at least $\Omega(b)$ samples to PAC learn differentially privately.

This equivalence also shows that some of the known results in \citep{BeimelKN:10, BeimelNS:13} are implied by well-known results from communication complexity, sometimes also giving simpler proofs. For example (1) the constant upper bound on the sample complexity of single points follows from the communication complexity of the equality function and (2) the bound $\DRDim(C) = O(\PRDim(C) + \log\log(|X|))$ follows from the classical result of \cite{Newman:91} on the relationship between the public and private coin models. See \sectionref{sec:cc-equiv-app} for more details and additional examples.

Our second contribution is a relationship of $\SCDP(C)$ (via the equivalences with $\Rand^{\rightarrow,\pub}(C)$) to Littlestone's \citeyearpar{Littlestone:87} dimension of $C$. Specifically, we prove
\begin{theorem}
\label{thm:ldinformal}
\begin{enumerate}
\item $\Rand^{\rightarrow,\pub}(C) = \Omega(\LDim(C))$.
\item For any $t$, there exists a class $C$ such that $\LDim(C)=2$ but $\Rand^{\rightarrow,\pub}(C) \geq t$.
\end{enumerate}
\end{theorem}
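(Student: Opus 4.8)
\textbf{Part 1 ($\Rand^{\rightarrow,\pub}(C)=\Omega(\LDim(C))$).} The plan is to reduce the \emph{augmented index} problem to the evaluation problem for $C$. Write $d=\LDim(C)$ and recall $\AI_d$: Alice holds $s\in\zo^d$, Bob holds an index $i\in\{1,\dots,d\}$ together with the prefix $s_1,\dots,s_{i-1}$, and Bob must output $s_i$; it is classical that $\Rand^{\rightarrow,\pub}(\AI_d)=\Omega(d)$ (conditioning on the shared coins, Fano's inequality bounds the conditional entropy of each bit $s_i$ given Alice's message and the prefix by $h(1/3)$, so the message carries $\Omega(d)$ bits of information about $s$; this is also the richness-type bound behind $\Rand^{\rightarrow,\pub}(\GT_d)=\Omega(d)$ of \citep{MiltersenNSW:98}). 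Now fix a complete binary tree of depth $d$ shattered by $C$: each internal node $u$ at depth $j<d$, identified with a string in $\zo^j$, carries a label $x_u\in X$, and for every leaf $\sigma\in\zo^d$ there is a concept $f_\sigma\in C$ with $f_\sigma(x_{\sigma_1\cdots\sigma_j})=\sigma_{j+1}$ for all $j<d$. Given any one-way public-coin protocol that evaluates $C$, Alice on input $s$ runs it with the concept $f_s$; Bob on input $(i,s_1,\dots,s_{i-1})$ walks $i-1$ steps down the (publicly known) tree following $s_1,\dots,s_{i-1}$, reads the label $x:=x_{s_1\cdots s_{i-1}}$ of the depth-$(i-1)$ node he reaches, runs Bob's side of the protocol on $x$, and outputs its answer. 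That answer equals $f_s(x)=s_i$ with probability at least $2/3$, the shared randomness stays shared, and the communication is unchanged; hence $\Rand^{\rightarrow,\pub}(C)\ge\Rand^{\rightarrow,\pub}(\AI_d)=\Omega(\LDim(C))$. (As a sanity check $\LDim(\Thr_b)=\Theta(b)$, so this subsumes the $\GT$-based lower bound the authors already note.)

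\textbf{Part 2 (a class with $\LDim(C)=2$ but $\Rand^{\rightarrow,\pub}(C)\ge t$).} Here one must \emph{exhibit} such a class $C=C_t$. Two a priori facts constrain the search: $\Rand^{\rightarrow,\pub}(C)\le\log_2|C|$ (Alice can just name her concept), and $\VCDim(C)\le\LDim(C)=2$ forces $|C|=O(|X|^2)$ by Sauer's lemma; so $C_t$ must have at least $2^{\Omega(t)}$ concepts over a domain of size at least $2^{\Omega(t)}$, and both quantitative claims can only hold for a genuinely ``large'' and somewhat unusual class. Given a candidate $C_t$, I would verify (i) $\LDim(C_t)=2$ by exhibiting one shattered depth-$2$ tree (routine) and, more importantly, ruling out every shattered depth-$3$ tree --- equivalently, showing $C_t$ is online learnable with exactly two mistakes --- either by describing a two-mistake learner and checking it against all adversaries, or by a finite case analysis over the choice of root query, its two children, and its four grandchildren, producing in each configuration a branch that no concept realizes; and (ii) $\Rand^{\rightarrow,\pub}(C_t)\ge t$ by showing that even with shared randomness Alice cannot compress below $t$ bits --- either by planting, inside the evaluation problem for $C_t$, a sub-problem known to require $\Omega(t)$ bits in the one-way public-coin model (but whose hardness is not an index- or threshold-type gadget, since such a gadget encoding $\Omega(t)$ bits would by itself have Littlestone dimension $\Omega(t)$), or by a direct rectangle / information-cost argument, or --- invoking \theoremref{thm:main-eq-intro} --- by lower-bounding $\PRDim(C_t)$, \ie showing $C_t$ admits no probabilistic representation of size $<t$ (a packing-type argument in the spirit of the threshold lower bound).

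\textbf{Expected main obstacle.} The crux is the construction of $C_t$ together with the no-depth-$3$-tree argument, and the difficulty is conceptual rather than computational: large one-way randomized communication complexity of an evaluation problem typically arises either from an index/order-type core (as in $\GT$) --- but any such core encoding $\Omega(t)$ bits drives $\LDim$ up to $\Omega(t)$ --- or from ``small''/hashable structure (singletons, equality, bounded-support concepts) --- which is cheap to communicate even with private coins. A class separating the two must therefore hide its complexity from the online-mistake model (and \emph{a fortiori} from the VC dimension) while keeping it visible to one-way communication, equivalently (by $\SCDP(C)=\Theta(\Rand^{\rightarrow,\pub}(C))$) to pure differential privacy. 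Once such a $C_t$ is produced and its two properties verified, \theoremref{thm:main-eq-intro} immediately upgrades the statement to a class $C_t$ with $\LDim(C_t)=2$ and $\SCDP(C_t)\ge t$, separating pure-private sample complexity from online learnability.
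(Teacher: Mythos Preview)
Your Part~1 is essentially the paper's proof: the authors also reduce $\AI_d$ to $\Eval_C$ via the depth-$d$ mistake tree (their \lemmaref{lem:ldim-ai}), and then invoke the $\Omega(d)$ lower bound for $\AI_d$ (their \lemmaref{lem:lb-thr}).

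For Part~2 there is a genuine gap: you correctly diagnose the constraints and the conceptual difficulty, but you never actually produce the class $C_t$, which is the entire content of this part. The paper's choice is
\[
\Line_p \;=\; \bigl\{\, f_{a,b}:\Z_p^2\to\zo \ \big|\ a,b\in\Z_p,\ f_{a,b}(x,y)=1 \text{ iff } ax+b=y \,\bigr\},
\]
the indicators of affine lines over $\Z_p$. The $\LDim(\Line_p)=2$ side is exactly the ``two mistakes suffice'' phenomenon you anticipated: after two distinct positively labeled points the learner knows the line, so the adversary can force only two mistakes; a depth-$2$ shattered tree is trivial to exhibit. The lower bound $\Rand^{\rightarrow,\pub}(\Line_p)\ge\log p - O(1)$ is proved by an information-theoretic argument (their \theoremref{thm:line-lb}): under the hard distribution that mixes a uniform random $(a,b,x,y)$ with a uniform random $(a,b)$ and a uniform point on that line, any short message from Alice leaves $(a,b)$ with high min-entropy, which forces the \emph{collision entropy} of $(x,y)$ conditioned on the message to be nearly $2\log p$ (two distinct lines meet in at most one point), hence the conditional distribution of $(x,y)$ is statistically close to uniform and Bob cannot distinguish the on-line case from the off-line case. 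Setting $p\ge 2^{t+7}$ gives the desired $C_t$.

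Note how $\Line_p$ threads the needle you identified: its hardness is \emph{not} an index/threshold core (indeed $\LDim=2$), nor is it ``hashable'' in the equality sense; the obstruction to compression is the incidence geometry (pairwise near-disjointness of lines), which is invisible to the online mistake model but visible to one-way communication. Your general verification plan (rule out depth-$3$ trees; distributional/information-cost lower bound for communication) is the right template, but without the concrete construction Part~2 is not yet a proof.
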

The first result follows from a natural reduction to the augmented index problem, which
is well-studied in communication complexity \citep{Bar-YossefJKK:04}. While new in our context, the relationship of Littlestone's dimension to quantum communication complexity was shown by \cite{Zhang:11}. Together with numerous known bounds on $\LDim$ \citep[\eg][]{Littlestone:87,MaassTuran:94b}, our result immediately yields a number of new lower bounds on $\SCDP$. In particular, results of \cite{MaassTuran:94b} imply that linear threshold functions over $I_b^d$ require $\Omega(d^2 \cdot b)$ samples to learn differentially privately. This implies that differentially private learners need to pay an additional dimension $d$ factor as well as a bit complexity of point representation $b$ factor over non-private learners. To the best of our knowledge such strong separation was not known before for problems defined over i.i.d. samples from a distribution (as opposed to worst case inputs). Note that this lower bound is also almost tight since $\log|\HS_b^d| =O(d^2(\log d + b))$ \citep[\eg][]{Muroga:71}.

In the second result of  \theoremref{thm:ldinformal} we use the class $\Line_p$ of lines in $\Z_p^2$ (a plane over a finite field $\Z_p$). A lower bound on the one-way quantum communication complexity of this class was first given by \cite{Aaronson:04} using his method based on a trace distance.
% We give a simpler and self-contained proof based on an information theoretic argument for completeness (\sectionref{sec:prld-sep}).

Finally, we consider PAC learning with $(\alpha,\beta)$-differential privacy. Our lower bound of $\Omega(b)$ on $\SCDP$ of thresholds together with the upper bound of $O(16^{\log^*(b)} \cdot \log(1/\beta))$ from \citep{BeimelNS:13approx} immediately imply a separation between the sample complexities of pure and approximate differential privacy. We show a stronger separation for the concept class $\Line_p$:
\begin{theorem}
  The sample complexity of $(\alpha, \beta)$-differentially privately learning $\Line_p$ is $O(\frac{1}{\alpha} \log(1/\beta))$.
\end{theorem}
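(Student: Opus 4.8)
The plan is to exhibit a $(\alpha,\beta)$-differentially private PAC learner for $\Line_p$ that uses only $m = O\!\left(\frac{\log(1/\beta)}{\alpha}\right)$ examples (suppressing the accuracy parameter $\eps$ and the confidence parameter). The algorithm follows the \emph{subsample-and-aggregate} paradigm, with a stability-based histogram for the aggregation step. The structural facts about $\Line_p$ that we exploit are: (i) $\VCDim(\Line_p)=2$, so a non-private learner needs only $O(1/\eps)$ examples; (ii) any two \emph{distinct} positive examples of a sample labelled by a target line $\ell^*$ determine $\ell^*$ \emph{exactly}; and (iii) a stability-based histogram releases a ``heavy'' item with $(\alpha,\beta)$-privacy using a number of items independent of the size of the item domain. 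It is fact (iii) that lets us avoid any $\log p$ dependence, which would be fatal since $\Rand^{\rightarrow,\pub}(\Line_p)=\Theta(\log p)$.

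Concretely, partition $S$ into $k=\Theta(\log(1/\beta)/\alpha)$ blocks of size $s=\Theta(1/\eps)$. On block $i$ run the trivial learner: if the block contains $\geq 2$ distinct positive examples, output the unique line through them; if its positive examples are all at a single point $q$, output the point-hypothesis $\mathbf{1}[\cdot=q]$; if it contains no positive example, output the all-zero hypothesis $h_0$. Call the outcome $h_i$. Then apply a stability-based histogram to $h_1,\dots,h_k$: if some hypothesis occurs $\geq\tau$ times with a gap $\geq\tau$ over the runner-up, for $\tau=\Theta(\log(1/\beta)/\alpha)$, release it; otherwise apply a second stability-based histogram to the points $q$ appearing among the point-hypotheses and, if some $q$ is similarly heavy, release $\mathbf{1}[\cdot=q]$; otherwise output $h_0$. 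Privacy is immediate: since the blocks are disjoint, changing one example of $S$ changes at most one $h_i$ (and at most one aggregated point), so each histogram is applied to a list of sensitivity $1$; composing the two $(\alpha/2,\beta/2)$-private releases gives $(\alpha,\beta)$-privacy, and the fall-back to $h_0$ is data-independent.

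For utility, let $q^*$ be the $\calD$-heaviest point of $\ell^*$ and $\mu^*=\Pr_{x\sim\calD}[x\in\ell^*\setminus\{q^*\}]$. The analysis splits on the size of $\Pr_\calD[\ell^*]$ and of $\mu^*$, using Chernoff bounds only on the \emph{fixed} events ``a block sees a point of $\ell^*\setminus\{q^*\}$'' and ``$q^*$ is sampled'' (so no union bound over the domain is ever needed). If $\mu^*=\Omega(\eps)$, then by the choice of $s$ each block sees $\geq 2$ distinct positive examples with probability close to $1$, so by fact (ii) almost all $h_i$ equal $f_{\ell^*}$ and the first histogram releases $f_{\ell^*}$, of zero error. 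If $\mu^*=O(\eps)$ but $\Pr_\calD[q^*]$ is large, almost all blocks see positives only at $q^*$, so the first (or the second) histogram releases $\mathbf{1}[\cdot=q^*]$, of error $\mu^*=O(\eps)$. If $\Pr_\calD[\ell^*]=O(\eps)$, most blocks see no positive, a strong $h_0$-mode forms — or, failing that, both aggregations return $\bot$ and we default to $h_0$ — which has error $\Pr_\calD[\ell^*]=O(\eps)$. The one thing to verify is that \emph{no strong mode at either level} forces us into the last regime: a ``heavy'' target line is either dominated by $q^*$ (yielding a strong point-mode) or has its mass spread over $\geq 2$ points (yielding a strong $f_{\ell^*}$-mode), so $\bot$-to-$h_0$ is taken only when $\Pr_\calD[\ell^*]$ is genuinely small.

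The step I expect to be the main obstacle is exactly this last bookkeeping: choosing the constants in $s,\tau,k$ and the histogram thresholds so that the three regimes overlap and every ``boundary'' case still produces \emph{some} $O(\eps)$-accurate released hypothesis (while the $h_0$ default is only used when it is good). The subtle point — and the reason one cannot simply privately identify $q^*$ and output $\mathbf{1}[\cdot=q^*]$ — is that certifying ``$\mu^*$ is small'' by a direct argument would need a union bound over the $p$ points of $\ell^*$, reintroducing a $\Theta(\log p)$ term and destroying the separation; instead, smallness of $\mu^*$ must be inferred indirectly from the fact that when $\mu^*=\Omega(\eps)$ the blocks overwhelmingly reveal $\ell^*$ and the line-mode wins. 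Finally, plugging $s=\Theta(1/\eps)$ and $k=\tau=\Theta(\log(1/\beta)/\alpha)$ yields $m=ks=O(\log(1/\beta)/\alpha)$ for constant $\eps$ and confidence, as claimed. (The same scheme is in fact proper: in the degenerate case one may release any line through $q^*$ avoiding all sampled negatives, since at most $1/\eps$ of the lines through $q^*$ carry $\calD$-mass $>\eps$ off $q^*$.)
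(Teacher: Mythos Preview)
Your proposal is correct in outline and takes a genuinely different route from the paper's proof. Both approaches run the same three-way basic learner (output the line if the block sees two distinct positive points, a point function if it sees one, and $h_0$ otherwise) on disjoint blocks and then privately aggregate; the difference is entirely in how instability at the aggregation step is handled.

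The paper's device is to \emph{randomize the block size}: it picks $t=2^k$ for $k$ uniform over an interval of length $\Theta(1/\delta)$, and shows that for all but a $\delta/2$ fraction of these $t$'s the basic learner has a single output that occurs with probability $\geq 2/3$. A single Propose--Test--Release on the mode of $\ell$ independent runs then suffices. This incurs an intermediate $2^{\Theta(1/\delta)}$ blowup in sample size, which the paper removes in a second stage by running the whole thing $O(\log(1/\delta))$ times with $\delta=1/2$ and selecting among the candidates with the exponential mechanism.

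Your device is instead to keep the block size fixed and add a \emph{second} stability histogram restricted to the point hypotheses, falling back to $h_0$ only when both fail. The reason this works is the structural fact you implicitly use: if two distinct points $q_i,q_j$ each satisfy $\Pr[\text{block outputs }\mathbf{1}[\cdot=q_i]]\ge\rho$, then $(1-r+q_i)^s,(1-r+q_j)^s\ge\rho$, forcing $q_i,q_j\ge r-\ln(1/\rho)/s$ and hence (since $q_i+q_j\le r$) $r\le 2\ln(1/\rho)/s=O(\eps)$. Thus whenever both histograms return $\bot$, either $p_0=(1-r)^s$ is non-negligible or two point buckets are simultaneously non-negligible, and in either case $r=O(\eps)$ so the $h_0$ default is accurate. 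This is exactly the ``bookkeeping'' you flag; it goes through with appropriate constants, and it sidesteps both the random-$t$ trick and the separate confidence-boosting stage. The paper's analysis is cleaner per case (random $t$ lands you in one of three well-separated regimes), while yours is more direct and avoids the intermediate exponential dependence on $1/\delta$.

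One caveat: your parenthetical proper-learning remark (``release any line through $q^*$ avoiding all sampled negatives'') is underspecified as a private mechanism and the paper explicitly notes its learner is improper; if you want to keep that claim you should instead release a \emph{uniformly random} line through $q^*$, which is data-independent given $q^*$ and has error $\le\eps$ except with probability $O(1/(\eps p))$.
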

Our upper bound is also simpler than the upper bound in \citep{BeimelNS:13approx}. See \sectionref{sec:impure-summary} for details.

% Some of the proofs and related discussions are omitted in this version due to space constraints. The reader is referred to the full version for a more detailed presentation \citep{FX14Full}.
\remove{
Finally, we also investigate distribution-specific private learning via distribution-specific analogs of $\PRDim$ and $\DRDim$.  We show that in this case probabilistic and deterministic representation are essentially the same thing, and that they correspond to the well-known notion of \emph{metric entropy}.  We give similar characterizations of these notions via communication complexity.  We also use these notions reprove known results about label privacy and distribution-specific privacy \citep{BeimelNS:13approx, ChaudhuriDKMT:06}.  We refer the reader to \sectionref{sec:dist} for the details.
}
\subsection{Related work}
There is now an extensive amount of literature on differential privacy in machine learning and related areas which we cannot hope to cover here. The reader is referred to the excellent surveys in \citep{SarwateC:13,DworkRoth:14}.

\cite{BlumDMN:05} showed that algorithms that can be implemented in the statistical query (SQ) framework of \cite{Kearns:98} can also be easily converted to differentially private algorithms. This result implies polynomial upper bounds on the sample (and computational) complexity of all learning problems that can be solved using statistical queries (which includes the vast majority of problems known to be solvable efficiently).
Formal treatment of differentially private PAC and agnostic learning was initiated in the seminal work of \cite{KLNRS:11}. Aside from the results we already mentioned, they separated SQ learning from differentially private learning. Further, they showed that SQ learning is (up to polynomial factors) equivalent to {\em local} differential privacy a more stringent model in which each data point is privatized before reaching the learning algorithm.

The results of this paper are for the distribution-independent learning, where the learner does not know the distribution over the domain. Another commonly-considered setting is distribution-specific learning in which the learner only needs to succeed with respect to a single fixed distribution $\D$ known to the learner. Differentially private learning in this setting and its relaxation in which the learner only knows a distribution close to $\D$ were studied by \cite{ChaudhuriHsu:11}. $\DRDim_\eps(C)$ restricted to a fixed distribution $\D$ is denoted by $\DRDim_\eps^\D(C)$ and equals to the logarithm of the smallest $\eps$-cover of $C$ with respect to the disagreement metric given by $\D$ (also referred to as the {\em metric entropy}). The standard duality between packing and covering numbers also implies that $\PRDim^\calD_{\frac{\eps}{2},\delta}(\calC) \geq \DRDim^\calD_{\eps}(\calC)  - \log(\tfrac{1}{1-\delta})$, and therefore these notions are essentially identical. It also follows from the prior work \citep{KLNRS:11,ChaudhuriHsu:11}, that $\DRDim_\eps^\D(C)$ characterizes the complexity of differentially private PAC and agnostic learning up to the dependence on the error parameter $\eps$ in the same way as it does for (non-private) learning \citep{BenedekI:91}. Namely,  $\Omega(\DRDim^\calD_{2\eps}(\calC)/\alpha)$ samples are necessary to learn $\alpha$-differentially privately with error $\eps$ (and even if only weaker label differentially-privacy is desired \citep{ChaudhuriHsu:11})and $O(\DRDim^\calD_{\eps/2}(\calC)/(\eps\alpha))$ samples suffice for $\alpha$-differentially private PAC learning. This implies that in this setting there are no dimension or bit-complexity costs incurred by differentially private learners. \cite{ChaudhuriHsu:11} also show that doubling dimension at an appropriate scale can be used to give upper and lower bounds on sample complexity of distribution-specific private PAC learning that match up to logarithmic factors.

In a related problem of sanitization of queries from the concept class $C$ the input is a database $D$ of points in $X$ and the goal is to output differentially privately a ``synthetic" database $\hat{D}$ such that for every $f\in C$, $\left|\frac{1}{|D|}\sum_{x \in D} f(x) - \frac{1}{|\hat{D}|}\sum_{x \in \hat{D}} f(x)\right| \leq \eps$. This problem was first considered by \cite{BlumLR:13} who showed an upper bound of $O(\VCDim(C) \cdot \log(|X|))$ on the size of the database sufficient for this problem and also showed a lower bound of $\Omega(b)$ on the number of samples required for solving this problem when $X=I_b$ for $C=\Thr_b$. It is easy to see that from the point of view of sample complexity this problem is at least as hard as (differentially private) {\em proper} agnostic learning of $C$ \citep[\eg][]{GuptaHRU:11}. Therefore
lower bounds on proper learning such as those in \citep{BeimelKN:10} and \citep{ChaudhuriHsu:11} apply to this problem and can be much larger than $\SCDP$ that we study. That said, to the best of our knowledge, the lower bound for linear threshold functions that we give was not known even for this harder problem.
Aside from sample complexity this problem is also computationally intractable for many interesting classes $C$ (see \citep{Ullman:13} and references therein for recent progress).

Sample complexity of more general problems in statistics was investigated in several works starting with \cite{DworkLei:09} (measured alternatively via convergence rates of statistical estimators) \citep{Smith:11,ChaudhuriH:12,DuchiJW:13focs,DuchiWJ13:nips}. A recent work of \cite{DuchiJW:13focs} shows a number of $d$-dimensional problems where differentially private algorithms must incur an additional factor $d/\alpha^2$ cost in sample complexity. However their lower bounds apply only to a substantially more stringent local model of differential privacy and are known not to hold in the model we consider here.

Differentially private communication protocols were studied by \cite{MPRTV:10} who showed that differential-privacy can be exploited to obtain a low-communication protocol and vice versa. Conceptually this result is similar to the characterization of sample complexity using $\PRDim$ given in \citep{BeimelNS:13}. Our contribution is orthogonal to \citep{MPRTV:10} since the main step in our work is going from a learning problem to a communication protocol for a different problem.

\section{Preliminaries}
\label{sec:prelims}
% We defer some additional standard definitions and preliminaries to
% \sectionref{sec:more-prelim}.

% \newcommand{\Ldim}{\mathrm{Ldim}}
\subsection{Learning models}
\label{sec:learn-model}
\begin{definition}
\label{def:pac-model}
An algorithm $A$ PAC learns a concept class $C$ from $n$ examples if for every $\epsilon > 0, \delta>0$, $f\in C$ and distribution $\calD$ over $X$, $A$ given access to $S = \{(x_i, \ell_i)\}_{i \in [n]}$ where each $x_i$ is drawn randomly from $\calD$ and $\ell_i = f(x_i)$, outputs, with probability at least $1-\delta$ over the choice of $S$ and the randomness of $A$, a hypothesis $h$ such that $\pr_{x\sim \calD}[f(x) \neq h(x)] \leq \eps$.
\end{definition}
\noindent
{\bf Agnostic learning:} The {\em agnostic} learning model was introduced by \cite{Haussler:92} and \cite{KearnsSS:94} in order to model situations in which the assumption that examples are labeled by some $f \in C$ does not hold. In its least restricted version the examples are generated from some unknown distribution $P$ over $X \times \zo$. The goal of an agnostic learning algorithm for a concept class $C$ is to produce a hypothesis whose error on examples generated from $P$ is close to the best possible by a concept from $C$. For a Boolean function $h$ and a distribution $P$ over  $X \times \zo$ let $\Delta(P,h) = \pr_{(x,\ell) \sim P}[h(x) \neq \ell]$. Define $\Delta(P,C) = \inf_{h \in  C}\{\Delta(P,h)\}$.
\cite{KearnsSS:94} define agnostic learning as follows.
\begin{definition}
\label{def:agnostic-model}
An algorithm $A$ {\em agnostically} learns a concept class $C$ if for every $\epsilon > 0, \delta>0$, distribution $P$ over $X \times \zo$, $A$, given access to $S = \{(x_i, \ell_i)\}_{i \in [n]}$ where each $(x_i,\ell_i)$ is drawn randomly from $P$, outputs, with probability at least $1-\delta$ over the choice of $S$ and the randomness of $A$, a hypothesis $h$ such that $\Delta(P,h) \leq \Delta(P,C) + \eps$.
\end{definition}
In both PAC and agnostic learning model an algorithm that outputs a hypothesis in $C$ is referred to as {\em proper}.

\subsection{Differentially Private Learning}
\label{sec:prelims-dp}
%{\bf Differential privacy \citep{DworkMNS:06}:}
Two sample sets $S = \{(x_i, \ell_i)\}_{i \in [n]}, S' = \{(x'_i,
      \ell'_i)\}_{i \in [n]}$ are said to be \emph{neighboring} if there exists $i \in
      [n]$ such that $(x_i, \ell_i) \neq (x'_i, \ell'_i)$, and for all $j
      \neq i$ it holds that $(x_j, \ell_j) = (x'_j, \ell'_j)$.  For $\al,\be>0$, an algorithm $A$ is
      $(\alpha,\beta)$-differentially private if for all neighboring $S, S' \in (X \times \zo)^n$ and for all $T \subseteq \mathsf{Range}(A)$:
      $$\Pr[A(S) \in T] \leq e^\alpha \Pr[A(S') \in T] + \beta,$$ where the probability is over the randomness of $A$ \citep{DworkMNS:06}. When $A$ is $(\alpha, 0)$-differentially private we say that it
      satisfies \emph{pure} differential privacy, which we also write
      as $\alpha$-differential privacy.

%Our models of learning are the standard PAC and agnostic models of learning a concept class $C$ (see Sec.~\ref{sec:learn-model} for the definitions). As usual, we denote the error parameter (or excess error in the case of agnostic learning) by $\eps$ and the confidence by $\delta$.

Intuitively, each sample $(x_i, \ell_i)$ used by a learning algorithm is the record of one individual, and
the privacy definition guarantees that by changing one record the output
distribution of the learner does not change by much.  We remark that,
in contrast to the accuracy of learning requirement, the differential privacy
requirement holds \emph{in the worst case} for all neighboring sets of examples $S, S'$,
not just those sampled i.i.d. from some distribution.  We refer the reader to the literature for a further justification of this notion of privacy \citep{DworkMNS:06}.

\remove{
\begin{definition}
  \label{def:dp}
  Let $\calC$ be a concept class of functions $X \dans \zo$.  A
  learning algorithm $A$ is said to learn a concept class $\calC$ with
  $(\alpha,\beta)$-differential privacy and $(\eps,\delta)$-accuracy
  using $n$ labeled examples if the following holds:
  \begin{description}
    \item[Accuracy:] For any $f \in \calC$ and any input distribution
      $\calD$ over $X$, it holds with probability $\geq 1 - \delta$ over the
      choice of $S = \{(x_i, y_i)\}_{i \in [n]}$ where each $x_i
      \drawnr \calD$ and $y_i = f(x_i)$ that $A(S)$ outputs $h$ such
      that $\Pr_{x \drawnr \calD}[h(x) = f(x)] \geq 1 - \eps$.
  \end{description}
\end{definition}
}

The \emph{sample complexity} $\SCDP_{\alpha,\eps,\delta}(\calC)$
is the minimal $n$ such that it is information-theoretically possible
to $(\eps, \delta)$-accurately and $\alpha$-differentially privately PAC learn $\calC$ with $n$ examples. $\SCDP$ without subscripts refers to $\SCDP_{1, \frac{1}{4}, \frac{1}{4}}$.

\subsection{Representation Dimension}
\begin{definition}[\citealp{BeimelKN:10}]
A class of functions $H$ \emph{ $\eps$-represents} $C$ if for every $f \in C$
and every distribution $\calD$ over the input domain of $f$, there
exists $h \in H$ such that $\Pr_{x\sim \calD}[f(x) \neq h(x)] \leq
\eps$. The \emph{deterministic representation dimension} of $C$, denoted as
$\DRDim_\eps(C)$ equals $\log(|H|)$ for the smallest $H$ that
$\eps$-represents $C$.  We also let $\DRDim(C) =
\DRDim_{\frac{1}{4}}(C)$.
\end{definition}
\begin{definition}[\citealp{BeimelNS:13}]
 A distribution $\calH$ over sets of boolean functions on $X$ is said
 to \emph{ $(\eps,\delta)$-probabilistically represent} $C$ if for
 every $f \in C$ and distribution $\calD$ over $X$, with probability
 $1-\delta$ over the choice of $H \drawnr \calH$, there exists $h \in
 H$ such that $\Pr_{x \sim \calD}[h(x) \neq f(x)] \leq \eps$.  The
 \emph{$(\eps,\delta)$-probabilistic representation dimension}
 $\PRDim_{\eps,\delta}(C)$ equals the minimal value of $\max_{H \in
   \supp(\calH)} \log |H|$, where the minimum is over all $\calH$ that
 $(\eps,\delta)$-probabilistically represent $C$.  We also let
 $\PRDim(C) = \PRDim_{\frac{1}{4}, \frac{1}{4}}(C)$.
\end{definition}

\cite{BeimelNS:13} proved the following characterization of $\SCDP$ by $\PRDim$.
\begin{theorem}[\citealp{KLNRS:11,BeimelNS:13}]
  \label{thm:scdp-equiv}
  \begin{align*}
    \SCDP_{\alpha,\eps,\delta}(\calC) & =
    O\left(\frac{1}{\alpha \eps} \left(\log(1/\eps)\cdot \left(\PRDim_{\frac{1}{4},\frac{1}{4}}(\calC) + \log\log\frac{1}{\eps\delta}\right) + \log \frac{1}{\delta}\right) \right) . \\
    \SCDP_{\alpha,\eps,\delta}(\calC) & =
    \Omega\left(\frac{1}{\alpha \eps} \PRDim_{1/4,1/4}(\calC) \right) .
  \end{align*}
  For agnostic learning we have that sample complexity is at most
  $$O\left(\left(\frac{1}{\al\eps}+\frac{1}{\eps^2}\right) \left(\log(1/\eps)\cdot \left(\PRDim_{\frac{1}{4},\frac{1}{4}}(\calC) + \log\log\frac{1}{\eps\delta}\right) + \log \frac{1}{\delta}\right) \right).$$
\end{theorem}
This form of upper bound combines accuracy and confidence boosting from \citep{BeimelNS:13} to first obtain $(\eps,\delta)$-probabilistic representation and then the use of exponential mechanism as in \citep{KLNRS:11}. The results in \citep{KLNRS:11} show the extension of this bound to agnostic learning. Note that the characterization for PAC learning is tight up to logarithmic factors.

\subsection{Communication Complexity}
\label{sec:comm-comp}
Let $X$ and $Y$ be some sets. A private-coin one-way protocol $\pi(x, y)$ from Alice who holds $x \in X$ to Bob who holds $y \in Y$ is given by Alice's randomized algorithm producing a communication $\sigma$ and Bob's
randomized algorithm which outputs a boolean value. We describe Alice's algorithm
by a function $\pi_A(x; r_A)$ of the input $x$ and random bits and Bob's
algorithm $\pi_B(\sigma, y; r_B)$ by a function of input $y$,
communication $\sigma$ and random bits. (These algorithms need not be
efficient.)  The (randomized) output of the
protocol on input $(x,y)$ is the value of $\pi(x,y;r_A,r_B) \triangleq \pi_B(\pi_A(x; r_A),y; r_B)$ on a randomly and uniformly chosen $r_A$ and $r_B$. The cost of the protocol $\CC(\pi)$ is given by the maximum
$|\sigma|$ over all $x \in X, y \in Y$ and all
possible random coins.

A public-coin one-way protocol $\pi(x, y)$ is given by a randomized Alice's
algorithm described by a function $\pi_A(x; r)$ and a randomized Bob's algorithm described by a function $\pi_B(\sigma, x; r)$.
The (randomized) output of the
protocol on input $(x,y)$ is the value of $\pi(x,y;r) \triangleq \pi_B(\pi_A(x; r),y; r)$ on a randomly and uniformly chosen $r$. The cost of the protocol $\CC(\pi)$ is
defined as in the private-coin case.

Let $\Pi^{\rightarrow}_\eps(g)$ denote the class of all private-coin
one-way protocols $\pi$ computing $g$ with error $\eps$, namely
private-coin one-way protocols $\pi$ satisfying for all $x \in
X, y \in Y$
$$\Pr_{r_A, r_B}[\pi(x, y; r_A, r_B) = g(x, y)] \geq 1 - \eps .$$
Define $\Pi^{\rightarrow,\pub}_\eps(g)$ similarly as the class of all
public-coin one-way protocols $\pi$ computing $g$.  Define
$\Rand_\eps^{\rightarrow}(g) = \min_{\pi \in
  \Pi^{\rightarrow}_\eps(g)} \CC(\pi)$ and
$\Rand_\eps^{\rightarrow,\pub}(g) = \min_{\pi \in \Pi^{\rightarrow,\pub}_\eps(g)} \CC(\pi)$.

A deterministic one-way protocol $\pi$ and its cost are defined as above but without dependence on random bits.
We will also require distributional notions of complexity, where there
is a fixed input distribution from which $x, y$ are drawn.  For a distribution $\mu$ over $X \times Y$, we define
$\Pi^{\rightarrow}_\eps(g; \mu)$ to be all deterministic one-way
protocols $\pi$ such that
$$\Pr_{(x,y) \sim \mu}[\pi(x, y) = g(x, y)] \geq 1 - \eps .$$
Define $\Dist^{\rightarrow}_\eps(g; \mu) = \min_{\pi \in
  \Pi^{\rightarrow}_\eps(g;\mu)} \CC(\pi)$.  A standard averaging
argument shows that the quantity $\Dist^{\rightarrow}_\eps(g;\mu)$ remains
unchanged even if we took the minimum over randomized (either public
or private coin) protocols computing $g$ with error $\leq \eps$ (\ie
since there must exist a fixing of the private coins that achieves as
good error as the average error).

Yao's minimax principle \citep{Yao:1977} states that for all functions $g$:
\begin{equation}
  \label{eq:minmax}
  \Rand^{\rightarrow,\pub}_\eps(g) = \max_{\mu} \Dist^{\rightarrow}_\eps(g; \mu) .
\end{equation}

Error in both public and private-coin protocols can be reduced by using several independent copies of the protocol and then taking a majority vote of the result. This implies that for every $\eps,\gamma \in (0,1/2)$,
\begin{equation}
\Rand^{\rightarrow,\pub}_\eps(f) = O(\Rand^{\rightarrow,\pub}_{1/2-\gamma}(f) \cdot \log(1/\eps)/\gamma^2)\label{eq:erroramplify} .
\end{equation}
Analogous statement holds for $\Rand^{\rightarrow}$. This allows us to treat protocols with constant errors in $(0,1/2)$ range as equivalent up to a constant factor in the communication complexity.

\subsection{Littlestone's Dimension}
\label{sec:littlestone-def}
While in this work we will not use the definition of the online mistake-bound model itself, we briefly describe it for completeness.
In the online mistake-bound model learning proceeds in rounds. At the beginning of round $t$, a learning algorithm has some hypothesis $h_t$. In round $t$, the learner sees a point $x_t \in X$ and predicts $h_t(x_t)$. At the end of the round, the correct label $y_t$ is revealed and the learner makes a mistake if $h_t(x_t)\neq y_t$. The learner then updates its hypothesis to $h_{t+1}$ and this process continues. When learning a concept class $C$ in this model $y_t = f(x_t)$ for some unknown $f \in C$. The (sample) complexity of such learning is defined as the largest number of mistakes that any learning algorithm can be forced to make when learning $C$. \cite{Littlestone:87} proved that it is exactly characterized by a dimension defined as follows.

Let $C$ be a concept class over domain $X$. A {\em mistake tree} $T$ over $X$ and $C$ is a binary tree in which each internal node $v$ is labelled by a point $x_v \in X$ and each leaf $\ell$ is labelled by a concept $c_\ell \in C$. Further, for every node $v$ and leaf $\ell$: if $\ell$ is in the right subtree of $v$ then $c_\ell(x_v) = 1$, otherwise $c_\ell(x_v) = 0$. We remark that a mistake tree over $X$ and $C$ does not necessarily include all concepts from $C$ in its leaves. Such a tree is called {\em complete} if all its leaves are at the same depth. Littlestone's dimension $\Ldim(C)$ is defined as the depth of the deepest complete mistake tree $T$ over $X$ and $C$ \citep{Littlestone:87}. Littlestone's dimension is also known to exactly characterize the number of (general) equivalence queries required to learn $C$ in Angluin's \citeyearpar{Angluin:88} exact model of learning \citep{Littlestone:87}.

\section{Equivalence between representation dimension and
  communication complexity}
\label{sec:cc-equiv}
We relate communication complexity to private learning by considering the communication problem associated with evaluating a function $f$ from a concept class $\calC$ on an input $x \in X$.
Formally, for a Boolean concept class $\calC$ over domain $X$, define $\Eval_\calC : \calC
\times X \dans \zo$ to be the function defined as $\Eval_\calC(f, x) =
f(x)$. In a slight abuse of notation we use $\Rand^{\rightarrow,\pub}_{\eps}(\calC)$ to denote $\Rand^{\rightarrow,\pub}_{\eps}(\Eval_\calC)$ (and similarly for $\Rand^{\rightarrow}_{\eps}(\calC)$).

 %where we encode $f \in \calC$ using some fixed encoding of the elements of $\calC$ into binary strings.

Our main result is the following two bounds.
\begin{theorem}
\label{thr:prdim-ccpub}
  For any $\eps \in [0, 1/2]$ and $\delta \in [0, 1]$, and any concept
  class $\calC$, it holds that:
  \begin{itemize}
  \item $\PRDim_{\eps,\delta}(\calC) \leq
    \Rand^{\rightarrow,\pub}_{\eps\delta}(\calC)$.
  \item $\PRDim_{\eps,\delta}(\calC) \geq
    \Rand^{\rightarrow,\pub}_{\eps+\delta-\eps\delta}(\calC)$.
  \end{itemize}
\end{theorem}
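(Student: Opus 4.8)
The plan is to prove the two inequalities separately. For the upper bound $\PRDim_{\eps,\delta}(\calC) \le \Rand^{\rightarrow,\pub}_{\eps\delta}(\calC)$ I would take an optimal public-coin one-way protocol $\pi$ for $\Eval_\calC$ of error $\eps\delta$ and cost $c$ and read a probabilistic representation directly off of it. Each fixing $r$ of the public coins defines the set $H_r := \{\, x \mapsto \pi_B(\sigma,x;r) \;:\; \sigma \in \zo^c \,\}$ of at most $2^c$ functions Bob can output --- here I use the standard convention that Alice's message has a fixed length $c$ (WLOG by padding the protocol tree, which does not change Bob's behavior) --- and I let $\calH$ be the distribution of $H_r$ for a uniform $r$, so $\log|H|\le c$ for every $H$ in the support. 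To check the representation property, fix $f\in\calC$ and a distribution $\calD$ over $X$ and consider the ``honest'' function $h_r := \pi_B(\pi_A(f;r),\cdot;r)\in H_r$, which satisfies $h_r(x)=\pi(f,x;r)$; the error guarantee gives $\Pr_r[h_r(x)\ne f(x)]\le\eps\delta$ for every $x$, so averaging over $x\sim\calD$ and swapping expectations yields $\Exp_r[\Pr_{x\sim\calD}[h_r(x)\ne f(x)]]\le\eps\delta$, and Markov's inequality then gives $\Pr_r[\Pr_{x\sim\calD}[h_r(x)\ne f(x)]>\eps]\le\delta$. This says exactly that with probability $\ge 1-\delta$ over $H\drawnr\calH$ some $h\in H$ (namely $h_r$) is $\eps$-close to $f$ under $\calD$, i.e.\ $\calH$ $(\eps,\delta)$-probabilistically represents $\calC$. (The degenerate cases $\eps=0$ or $\delta=0$ force $h_r=f$ identically and are immediate.)

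For the lower bound $\PRDim_{\eps,\delta}(\calC)\ge\Rand^{\rightarrow,\pub}_{\eps+\delta-\eps\delta}(\calC)$, the point is that the representation guarantee is stated over input \emph{distributions} while one-way communication complexity is over \emph{worst-case} inputs, so I would invoke Yao's minimax principle \eqref{eq:minmax} to reduce to the distributional model. Fix any distribution $\mu$ over $\calC\times X$; it then suffices to build a deterministic one-way protocol of cost at most $m:=\PRDim_{\eps,\delta}(\calC)$ with error at most $\eps+\delta-\eps\delta$ under $\mu$. Take $\calH$ witnessing $\PRDim_{\eps,\delta}(\calC)=m$, let $\mu_f$ be the conditional law of $x$ given $f$ under $\mu$, and for each $H\in\supp(\calH)$ let $\pi_H$ be the protocol in which Alice, on input $f$, sends (in $\lceil\log|H|\rceil\le m$ bits) the index of some $h^H_f\in\argmin_{h\in H}\Pr_{x\sim\mu_f}[h(x)\ne f(x)]$, and Bob outputs $h^H_f(x)$. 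Applying the representation property to $f$ and $\mu_f$: with probability $\ge 1-\delta$ over $H\drawnr\calH$ that minimum is $\le\eps$, and it is always $\le 1$, so $\Exp_H[\Pr_{x\sim\mu_f}[h^H_f(x)\ne f(x)]]\le(1-\delta)\eps+\delta=\eps+\delta-\eps\delta$ for every $f$. Averaging over $f\sim\mu$ and swapping expectations, some fixed $H^\star\in\supp(\calH)$ gives a deterministic one-way protocol $\pi_{H^\star}$ of cost $\le m$ and error $\le\eps+\delta-\eps\delta$ under $\mu$; taking the maximum over $\mu$ and applying \eqref{eq:minmax} finishes the proof.

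I expect the only genuinely delicate step to be this use of Yao's minimax to bridge the worst-case communication model and the distribution-indexed representation guarantee --- it is also the reason the argument naturally produces the \emph{public}-coin complexity (a single shared $H\drawnr\calH$ plays the role of the shared randomness). Everything else is routine: the Markov estimate, the two expectation swaps, and the per-$(f,\mu_f)$ choice of the best $h\in H$. The only fine print --- the fixed-length-message convention in the first direction and the $\lceil\log|H|\rceil$ rounding in the second --- perturbs the bounds only by an additive constant and can be absorbed by the usual conventions if one wants the inequalities exactly as displayed.
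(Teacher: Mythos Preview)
Your proposal is correct and follows essentially the same approach as the paper's proof: in both directions the key objects and the Markov/averaging steps are identical. The only cosmetic difference is that in the $(\geq)$ direction you explicitly average over $H\drawnr\calH$ to extract a single deterministic $H^\star$, whereas the paper keeps $H$ as the public randomness of a randomized protocol and leaves the derandomization implicit in the definition of $\Dist^{\rightarrow}_\eps$; your choice of $h^H_f$ as the $\argmin$ rather than ``any $h$ with error $\le\eps$, else arbitrary'' is also harmless.
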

\begin{proof}
  \noindent $(\leq)$: let $\pi$ be the public-coin one-way protocol
  that achieves the optimal communication complexity $c$.  For each
  choice of the public random coins $r$, let $H_r$ denote the set of
  functions $h_\sigma(x) = \pi_B(\sigma, x; r)$ over all possible
  $\sigma$.  Thus, each $H_r$ has size at most $2^c$.  Let the
  distribution $\calH$ be to choose uniformly random $r$ and then
  output $H_r$.

  We show that this family $(\eps,\delta)$-probabilistically
  represents $\calC$.  We know from the fact that $\pi$ computes
  $\Eval_\calC$ with error $\eps\delta$ that it must hold for all $f
  \in \calC$ and $x \in X$ that:
  $$\Pr_{r}[\pi_B(\pi_A(f; r), x; r) \neq f(x)]
  \leq \eps \delta .$$
  In particular, it must hold for any distribution $\calD$ over
  $X$ that:
  $$\Pr_{x \sim \calD, r}[\pi_B(\pi_A(f; r), x; r) \neq f(x)]
  \leq \eps \delta .$$
  Therefore, it must hold that
  $$\Pr_{r} \left[ \Pr_{x \sim\calD}
    [\pi_B(\pi_A(f; r), x; r) \neq f(x)] > \eps\right] < \delta .$$
  Note that $\pi_B(\pi_A(f; r), x; r) \equiv h_{\pi_A(f; r)}(x) \in H_r$ and therefore, with probability $\geq 1 - \delta$ over the choice of
  $H_r \drawnr \calH$, there exists $h \in H_r$ such that $\Pr_{x \sim\calD}[h(x) \neq
  f(x)] \leq \eps$.

  \noindent $(\geq)$: let $\calH$ be the distribution over sets of
  boolean functions that achieves $\PRDim_{\eps,\delta}(\calC)$.
  We will show that for each distribution $\mu$ over inputs $(f, x)$,
  we can construct a $(\eps+\delta-\eps\delta)$-correct protocol for $\Eval_\calC$
  over $\mu$ that has communication bounded by
  $\PRDim_{\eps,\delta}(\calC)$.  Namely, we will prove that
  \begin{equation}
    \label{eq:distbound}
    \max_\mu \Dist^{\rightarrow}_{\eps+\delta-\eps\delta}(\Eval_\calC; \mu) \leq
    \PRDim_{\eps,\delta}(\calC) .
  \end{equation}
  By Yao's minimax principle (\equationref{eq:minmax}) \citep{Yao:1977} this implies that
  $$\Rand^{\rightarrow,\pub}_{\eps+\delta-\eps\delta}(\calC) \leq \PRDim_{\eps,\delta}(\calC) .$$

  Fix $\mu$.  This induces a marginal distribution $\calF$ over
  functions $f \in \calC$ and for every $f\in \calC$ a distribution
  $\calD_f$ which is $\mu$ conditioned on the function being $f$ (note
  that $\mu$ is equivalent to drawing $f$ from $\calF$ and then $x$
  from $\calD_f$).  The protocol $\pi$ is defined as follows: use
  public coins to sample $H \drawnr \calH$.  Alice knows $f$ and so
  knows the distribution $\calD_f$.  Alice sends the index of $h \in
  H$ such that $\Pr_{x \sim\calD_f}[h(x) \neq f(x)] \leq \eps$ if such $h$
  exists or an arbitrary $h \in H$ otherwise. Bob returns $h(x)$.

  The error of this protocol can be analyzed as follows.  Fix $f$ and
  let $G_f$ denote the event that $H \drawnr \calH$ contains $h$ such
  that $\Pr_{x \sim \calD_f}[h(x) \neq f(x)] \leq \eps$.
  Observe that $G_f$ is independent of $\calD_f$ so that
  even conditioned on $G_f$ $x$ remains distributed according to
  $\calD_f$. Also, since $\calH$ $(\eps,\delta)$-probabilistically
  represents $\calC$, we know that for every $f$, $\Pr_r[G_f] \geq 1 -
  \delta$.  Therefore we can then deduce that:
  \begin{align*}
    \Pr_{r, (f,x) \sim \mu}[\pi(f,x;r) = f(x)] & =
    \Pr_{r, (f,x) \sim \mu}[\pi(f,x;r) = f(x) \wedge G_f] + \Pr_{r,
      (f,x) \sim \mu} [\pi(f,x;r) = f(x) \wedge \neg G_f] \\
    & \geq \Pr_{r, f \sim \calF}[G_f] \cdot \Pr_{r,    x\sim   \calD_f}[\pi(f,x;r) = f(x) \mid G_f] \\
%    & \geq (1 - \delta) \Pr_{r, \calD_f}[h(x) = f(x) \mid h
%    = \pi_A(f;r), G_f] \\
    & \geq (1 - \delta)(1-\eps)  = 1 - \delta - \eps + \epsilon\delta .
  \end{align*}
  Thus $\pi$ computes $\calC$ with error at most $\eps + \delta-\eps\delta$ and
  it has communication bounded by $\PRDim_{\eps,\delta}(\calC)$.
\end{proof}

We also establish an analogous equivalence for $\DRDim$ and private-coin protocols.
\begin{theorem}
\label{thr:drdim-cc}
  For any $\eps \in [0, 1/2]$, it holds that:
  \begin{itemize}
  \item $\DRDim_\eps(\calC) \leq
    \Rand_{\eps/2}^{\rightarrow}(\calC)$ .
    \item $\DRDim_\eps(\calC) \geq
    \Rand_{\eps}^{\rightarrow}(\calC)$ .
  \end{itemize}
\end{theorem}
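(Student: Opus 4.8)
The plan is to follow the template of \theoremref{thr:prdim-ccpub}, but since the protocols are now private-coin we must treat Alice's randomness and Bob's randomness separately: Alice's coins get converted into a distribution she can sample from ``for free,'' whereas Bob's coins must be eliminated by a pointwise majority vote, and it is this last step that accounts for the loss from $\eps$ to $\eps/2$ in the upper bound.

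For the upper bound $\DRDim_\eps(\calC)\le\Rand^{\rightarrow}_{\eps/2}(\calC)$, take an optimal private-coin protocol $\pi$ with error $\eps/2$ and cost $c:=\Rand^{\rightarrow}_{\eps/2}(\calC)$, assuming WLOG that every message has length exactly $c$. Unlike in the public-coin case, a message $\sigma$ no longer determines a single hypothesis because $\pi_B$ is randomized; I would repair this by defining $h_\sigma(x)=1$ iff $\Pr_{r_B}[\pi_B(\sigma,x;r_B)=1]\ge 1/2$ and setting $H=\{h_\sigma:\sigma\in\zo^c\}$, so $\log|H|\le c$. To see $H$ is an $\eps$-representation, fix $f\in\calC$ and a distribution $\calD$ over $X$. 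Whenever $h_\sigma(x)\ne f(x)$ we have $\Pr_{r_B}[\pi_B(\sigma,x;r_B)\ne f(x)]\ge 1/2$, so $\mathbf{1}[h_\sigma(x)\ne f(x)]\le 2\Pr_{r_B}[\pi_B(\sigma,x;r_B)\ne f(x)]$ pointwise, and averaging over $x\sim\calD$ gives $\Pr_{x\sim\calD}[h_\sigma(x)\ne f(x)]\le 2\Pr_{x\sim\calD,r_B}[\pi_B(\sigma,x;r_B)\ne f(x)]$. Averaging the right-hand side over $\sigma=\pi_A(f;r_A)$ yields exactly the expected error of $\pi$ on inputs $(f,x)$ with $x\sim\calD$, which is at most $\eps/2$; hence some $\sigma$ in the support of $\pi_A(f;\cdot)$ satisfies $\Pr_{x\sim\calD}[h_\sigma(x)\ne f(x)]\le\eps$, as needed.

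For the lower bound $\DRDim_\eps(\calC)\ge\Rand^{\rightarrow}_\eps(\calC)$, let $H$ realize $d:=\DRDim_\eps(\calC)$. The difficulty is that Alice, who knows $f$, does not know Bob's input $x$ (nor any distribution over it), so she cannot simply name the good hypothesis. I would get around this with the minimax theorem applied to the zero-sum game whose row player picks $h\in H$, whose column player picks $x\in X$ (mixed strategies being distributions $\calD$ over $X$), and whose cost is $\mathbf{1}[h(x)\ne f(x)]$: the defining property of an $\eps$-representation says that against every mixed column strategy $\calD$ the row player has a pure response $h\in H$ with $\Pr_{x\sim\calD}[h(x)\ne f(x)]\le\eps$, so the game value is at most $\eps$, and hence (since $H$ is finite) there is a single distribution $p_f$ over $H$ with $\Pr_{h\sim p_f}[h(x)\ne f(x)]\le\eps$ for every $x\in X$. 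The protocol is then immediate: Alice samples $h\sim p_f$ with her private coins, sends the index of $h$ (which takes $\log|H|\le d$ bits), and Bob outputs $h(x)$; this errs with probability at most $\eps$ on every input $(f,x)$, so $\Rand^{\rightarrow}_\eps(\calC)\le d$.

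I expect the upper bound to be the only real obstacle, specifically the observation that Bob's private randomness cannot be removed by globally fixing his coins (that would blow up the error once quantified over all of $X$) but must be removed input-by-input via the majority vote $h_\sigma$, at the price of a factor two in accuracy --- which is exactly why the statement reads $\Rand^{\rightarrow}_{\eps/2}$ rather than $\Rand^{\rightarrow}_{\eps}$. The lower bound is routine once one thinks to invoke minimax to replace the distribution-dependent choice of a good $h\in H$ by one fixed distribution over $H$ that works simultaneously for all $x$.
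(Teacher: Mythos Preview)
Your proposal is correct and essentially identical to the paper's proof: for the upper bound the paper also takes $h_\sigma(x)=\mathsf{maj}_{r_B}\pi_B(\sigma,x;r_B)$ and uses the same factor-two Markov/averaging argument (it averages over $\sigma$ first and then bounds the ``bad $x$'' probability, whereas you bound $\mathbf{1}[h_\sigma(x)\neq f(x)]\le 2\Pr_{r_B}[\pi_B(\sigma,x;r_B)\neq f(x)]$ first and then average --- the same computation), and for the lower bound the paper applies von Neumann minimax in exactly the way you describe to obtain the distribution $p_f$ and then has Alice sample from it.
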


\begin{proof}
  \noindent $(\leq)$: let $\Rand_{\eps/2}^{\rightarrow}(\calC) = c$
  and fix the private-coin one-way protocol $\pi$ that achieves $c$.
  We define the deterministic representation $H$ to be all functions
  $h_\sigma(x) = \mathsf{maj}_{r_B} \{ \pi(\sigma, x; r_B) \}$, \ie the
  majority value of Bob's outputs on input $x$ and
  communication $\sigma$.  Observe that there are $2^c$ such functions
  (one for each $\sigma$ possible) and therefore it suffices to show
  that $H$ $\eps$-deterministically represents $\calC$.  To see this,
  observe that for each $f \in \calC$, and all $x \in X$, it holds
  that:
  $$\Pr_{r_B, \sigma \drawnr \pi_A(f; r_A)}[f(x) = \pi_B(\sigma, x; r_B)]
  \geq 1 - \eps/2.$$
  In particular, this means that for all distributions $\calD$ over
  $X$, it holds that
  $$\Pr_{x\sim\calD, r_B, \sigma \drawnr \pi_A(f; r_A)}[f(x) =
  \pi_B(\sigma, x; r_B)] \geq 1 - \eps/2 .$$
  By a standard averaging argument, there must exist at least one
  $\sigma$ such that
  $$\Pr_{x \sim\calD, r_B}[f(x) = \pi_B(\sigma, x; r_B)]\geq 1 -
  \eps/2 .$$
  Now say that $x$ is bad if $\Pr_{r_B}[f(x) = \pi_B(\sigma, x; r_B)]
  < 1/2$.  By the above, it follows that $\Pr_{x \sim\calD}[x \text{ bad}] \leq
  \eps$.  By definition, if $x$ is not bad then $f(x) = h_\sigma(x)$,
  since $h_\sigma$ is the majority of $\pi_B(\sigma, x;
  r_B)$ over all $r_B$.  Therefore
  $$\Pr_{x\sim\calD}[f(x) = h_\sigma(x)] \geq 1 - \eps .$$
  This implies that $H$ $\eps$-deterministically represents $\calC$.

  \noindent $(\geq)$: We first apply von-Neumann's Minimax theorem to the
  definition of deterministic representation.  In particular, suppose
  $H$ is the family of functions that achieves
  $\DRDim_\eps(\calC)$. Thus, for each $f \in \calC$ and each
  distribution $\calD$ over $X$, there exists $h \in H$ such that
  $\Pr_{\calD}[h(x) = f(x)] \geq 1 - \eps$.  We define a zero-sum game for each $f$ with
  the first player choosing a point $x \in X$ and the second player
  choosing a hypothesis $h \in H$ and the payoff of the second player being $|h(x) - f(x)|$. The definition of $\DRDim_\eps(\calC)$ implies that for every mixed strategy of the first player the second player has a pure strategy that achieves payoff of at least $1-\eps$. By the Minimax theorem there exists
  a distribution $\calh_f$ over $H$ such that, for every $x \in
  X$, it holds that
  $$\E_{h \sim \calh_f}[|h(x) - f(x)|] = \Pr_{h \sim \calh_f}[h(x) = f(x)] \geq 1 - \eps .$$
  Our private-coin protocol $\pi$ for $\Eval_\calC$ will be the following:
  on input $f$, Alice will use her private randomness to sample $h
  \sim \calh_f$ and send the index of $h$ to Bob.  Bob then outputs $h(x)$.
  Thus, for each $f, x$, it holds that
  $$\Pr_{\pi}[\pi(f,x) = f(x)] = \Pr_{h \sim \calh_f}[h(x) =
  f(x)] \geq 1 - \eps$$
  and so the protocol computes $\Eval_\calC$ with error $\leq \eps$.
\end{proof}

An immediate corollary of these equivalences and eq.\eqref{eq:erroramplify} is that
  $\DRDim(C) = \Theta(\Rand^{\rightarrow}_{1/3}(\calC))$ and
  $\PRDim(C) = \Theta(\Rand^{\rightarrow,\pub}_{1/3}(\calC))$ as we stated in \theoremref{thm:main-eq-intro}.

\subsection{Applications}
\label{sec:cc-equiv-app}

Our equivalence theorems allow us to import many results from
communication complexity into the context of private PAC learning,
both proving new facts and simplifying proofs of previously known
results in the process.

\paragraph{Separating $\SCDP$ and $\VCDim$ dimension.}
%We resolve one of the main open questions of \citep{BeimelNS:13} by separating $\PRDim$ and
%$\VCDim$, and our separations hold even for simple natural classes
%such as threshold functions.  Namely,
Define $\Thr_b$ as the family of
functions $t_x : I_b \rightarrow \zo$ for $x \in I_b$ where
$t_x(y) = 1$ if and only if $y \geq x$. The lower bound follows from
an observation that $\Eval_{\Thr_b}$ is equivalent to the
``greater-than" function $\GT_b(x,y) = 1$ if and only if $x > y$,
where $x,y \in \zo^b$ are viewed as binary representations of integers
in $I_b$. Note $\Eval_{\Thr_b}(t_x, y) = 1-\GT_b(x, y)$ and
therefore these functions are the same up to the negation.  $\GT_b$ is
a well studied function in communication complexity and it is known
that $\Rand^{\rightarrow,\pub}_{1/3}(\GT_b) = \Omega(b)$
\citep{MiltersenNSW:98}. By combining this lower bound with
\theoremref{thr:prdim-ccpub} we obtain that $\VCDim(\Thr_b) = 1$ yet
$\PRDim(\Thr_b) = \Omega(b)$.  From \theoremref{thm:scdp-equiv} it
follows that $\SCDP(\Thr_b) = \Omega(b)$.

We note that it is known that VC dimension corresponds to the maximal
distributional one-way communication complexity over all \emph{product} input
distributions. Hence this separation is analogous to separation of
distributional one-way complexity over product distributions and the
maximal distributional complexity over all distributions achieved
using the greater-than function \citep{KremerNR:99}.

We also give more such separations using lower bounds on $\PRDim$
based on Littlestone's dimension.  These are discussed in \sectionref{sec:ldim}.

\paragraph{Accuracy and confidence boosting.}
Our equivalence theorems give a simple alternative way to reduce error in probabilistic and deterministic representations
without using sequential boosting as was done in \citep{BeimelNS:13}.
 Given a private PAC learner with constant error, say $(\eps, \delta) = (1/8, 1/8)$, one
can first convert the learner to a communication protocol with
error $1/4$, use $O(\log \frac{1}{\eps' \delta'})$ simple independent repetitions (as in eq.\eqref{eq:erroramplify}) to
reduce the error to $\eps' \delta'$, and then convert the protocol
back into a $(\eps', \delta')$-probabilistic
representation. The ``magic'' here happens when we convert
  between the communication complexity and probabilistic
  representation using min-max type arguments. This is
  the same tool that can be used to prove (computationally inefficient) boosting
  theorems.

\paragraph{Probabilistic vs. deterministic representation dimension.}
It was shown by \citet{Newman:91} that public and private coin
complexity are the same up to additive logarithmic terms.  In our
setting (and with a specific choice of error bounds to simplify
presentation), Newman's theorem implies that
\begin{equation}
  \label{eq:newman}
  \Rand^{\rightarrow}_{1/8}(\calC) \leq  \Rand^{\rightarrow,\pub}_{1/9}(\calC) +
  O(\log \log (|\calC| |X| )) .
\end{equation}
We know by Sauer's lemma that $\log |\calC| \leq O(\VCDim(\calC) \cdot
\log |X|)$,
therefore we deduce that:
$$\Rand^{\rightarrow}_{1/8}(\calC) \leq \Rand^{\rightarrow,\pub}_{1/9}(\calC) +
O(\log \log \VCDim(\calC) + \log \log |X| ) .$$ %
By our equivalence theorems, $\DRDim(\calC) = \DRDim_{1/4}(\calC) \leq \Rand^{\rightarrow}_{1/8}(\calC)$ and
$\Rand^{\rightarrow,\pub}_{1/9}(\calC) \leq \PRDim_{1/16,1/24}(\calC) = O(\PRDim(\calC))$.
This implies that
$$\DRDim(\calC) = O(\PRDim(\calC) % + \log \log \VCDim(\calC)
 + \log\log|X|). $$ %
A version of this was first proved in
\citep{BeimelNS:13}, whose proof is similar in spirit to the proof of
Newman's theorem.  We also remark that the fact that
$\DRDim_{1/3}(\mathsf{Point}_b) = \Omega(\log b)$ while
$\PRDim_{1/3}(\mathsf{Point}_b) = O(1)$ \citep{BeimelKN:10,
  BeimelNS:13} corresponds to the fact that the private-coin
complexity of the equality function is $\Omega(\log b)$, while the
public-coin complexity is $O(1)$.  Here $\mathsf{Point}_b$ is the
family of point functions over $\zo^b$, \ie functions that are zero everywhere
except on a single point.

\paragraph{Simpler learning algorithms.} Using our equivalence
theorems, we can ``import'' results from communication complexity to
give simple private PAC learners.  For example, the well-known
constant communication equality protocol using inner-product-based hashing can be
converted to a probabilistic representation using
\theoremref{thr:prdim-ccpub}, which can then be used to learn point
functions.  The resulting learning algorithm is somewhat simpler than the constant sample
complexity learner for $\mathsf{Point}_b$ described in
\citep{BeimelKN:10} and we believe that this view also provides useful
intuition. We remark that the probabilistic representation for $\mathsf{Point}_b$ that results from the communication protocol
is known and was used for learning $\mathsf{Point}_b$ by \cite{Feldman:09robust} in the context of evolvability. A closely related representation is also mentioned in \citep{BeimelNS:13}.

Furthermore in some cases this connection can lead to
\emph{efficient} private agnostic learning algorithms.  Namely, if there is
a communication protocol for $\Eval_\calC$ where
Bob's algorithm is polynomial-time then one can run
the exponential mechanism in time $2^{O(\PRDim(C))}$ to differentially privately
agnostically learn $\calC$.  %This approach together with the standard inner-product-based communication protocol for the equality function gives an efficient and differentially private algorithm for learning $\mathsf{Point}_b$ that is simpler and more efficient than the one given in \citep{BeimelKN:10}.

 \remove{
  Curiously, this idea does not work for non-private
learning, since the communication-complexity characterization of VC
dimension is a necessarily average-case notion rather than worst-case,
and one cannot reduce error by simple repetition in the average-case
setting.}

\section{Lower Bounds via Littlestone's Dimension}
\label{sec:ldim}

In this section, we show that Littlestone's dimension lower bounds the sample complexity of differentially private learning.
Let $C$ be a concept class over $X$ of $\Ldim$ $d$. Our proof is based on a reduction from the communication complexity of $\Eval_{C}$ to
the communication complexity of Augmented Index problem on $d$ bits. $\AI$ is the promise problem where Alice gets a string $x_1,
  \ldots, x_d \in \zo^d$ and Bob gets $i \in [d]$ and $x_1, \ldots,
  x_{i-1}$, and $\AI(x, (i, x_{[i-1]})) = x_i$ where $x_{[i-1]} =
  (x_1, \ldots, x_{i-1})$. A variant of this problem in which the length of the prefix is not necessarily $i$ but some additional parameter $m$ was first explicitly defined by \citet{Bar-YossefJKK:04} who proved that it has randomized one-way communication complexity of $\Omega(d-m)$. The version defined above is from \citep{BaIPW:10} where it is also shown that a lower bound for $\AI$ follows from an earlier work of \citep{MiltersenNSW:98}.  We use the following lower bound for $\AI$.
  \begin{lemma}
\label{lem:lb-thr}
  $\Rand^{\rightarrow}_\eps(\AI) \geq (1 - \Ent(\eps)) d$, where $\Ent(\eps)=\eps \log(1/\eps) + (1-\eps) \log(1/(1-\eps))$ is the binary entropy function.
\end{lemma}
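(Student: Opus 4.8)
The plan is to prove the lower bound $\Rand^{\rightarrow}_\eps(\AI) \geq (1-\Ent(\eps))d$ directly by an information-theoretic argument, without appealing to the cited works, using the one-way structure of the protocol. Fix any private-coin one-way protocol $\pi$ computing $\AI$ with error at most $\eps$ on every input, and let Alice's message be $\Sigma = \pi_A(X; r_A)$, of length at most $c = \CC(\pi)$. Feed $\pi$ the distribution where $X = (X_1,\ldots,X_d)$ is uniform on $\zo^d$ and $I$ is uniform on $[d]$, with Bob receiving $(I, X_{[I-1]})$. The key observation is that for each fixed $i$, Bob — armed with $\Sigma$, $X_{[i-1]}$, and public knowledge of the protocol — outputs a bit equal to $X_i$ with probability $\geq 1-\eps$, since this holds on every input and hence on average. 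By Fano's inequality, this means $\Ent(X_i \mid \Sigma, X_{[i-1]}) \leq \Ent(\eps)$ for every $i \in [d]$.

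The main step is then a chain-rule / superadditivity argument. Summing the Fano bounds over $i$ and using the chain rule for entropy,
\begin{align*}
  d \cdot \Ent(\eps) &\geq \sum_{i=1}^d \Ent(X_i \mid \Sigma, X_{[i-1]}) \\
  &\geq \sum_{i=1}^d \Ent(X_i \mid \Sigma, X_{[i-1]}, \ldots) = \Ent(X \mid \Sigma),
\end{align*}
where I use that conditioning only decreases entropy to insert the full prefix, and then telescope via the chain rule $\Ent(X \mid \Sigma) = \sum_i \Ent(X_i \mid \Sigma, X_1,\ldots,X_{i-1})$. Since $X$ is uniform on $\zo^d$, $\Ent(X) = d$, so $\Ent(X \mid \Sigma) = d - \Info(X;\Sigma) \geq d - \Ent(\Sigma) \geq d - c$. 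Combining, $d - c \leq d\,\Ent(\eps)$, i.e. $c \geq (1-\Ent(\eps))d$, as claimed. (One subtlety: Bob's randomness $r_B$ is independent of everything, so we may as well fix it to the best value by averaging, which is why Fano applies to a deterministic function of $(\Sigma, X_{[i-1]})$.)

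The one delicate point — and the thing to get right — is the direction of the inequality when inserting prefixes: we want $\Ent(X_i \mid \Sigma, X_{[i-1]})$ to be at least the $i$-th term of the chain-rule expansion of $\Ent(X\mid\Sigma)$, which is exactly $\Ent(X_i\mid\Sigma,X_1,\ldots,X_{i-1})$ — so in fact no extra conditioning is needed; the sum of Fano terms equals $\Ent(X\mid\Sigma)$ on the nose. So the chain of inequalities is simply $d\,\Ent(\eps) \geq \sum_i \Ent(X_i\mid\Sigma,X_{[i-1]}) = \Ent(X\mid\Sigma) = d - \Info(X;\Sigma) \geq d - c$. I expect the main obstacle to be purely expository: carefully stating that the worst-case error guarantee yields the average-case bound needed for Fano, and handling the fact that in $\AI$ the prefix length equals $i$ so there is no loss (unlike the $\Omega(d-m)$ variant). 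This should all go through cleanly.
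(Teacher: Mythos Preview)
Your proposal is correct and follows essentially the same argument as the paper's proof: bound $\Info(X;\Sigma)$ by the message length, expand via the chain rule into the per-coordinate terms $\Ent(X_i\mid\Sigma,X_{[i-1]})$, and control each of these by $\Ent(\eps)$ using Fano's inequality. The only cosmetic differences are that the paper first passes to a deterministic distributional protocol via Yao's minimax and applies Fano once with $I$ random (average error $\le\eps$ over $\mu$), whereas you work directly with the randomized protocol and apply Fano for each fixed $i$ using the worst-case guarantee; both routes yield the same inequality $c\ge(1-\Ent(\eps))d$.
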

  A proof of this lower bound can be easily derived by adapting the proof in \citep{Bar-YossefJKK:04} and we include it in \sectionref{sec:ppac-cc-ai}.

  %We then obtain a lower bound on the sample complexity of private PAC learning of $C$ using the results of Section \ref{sec:cc-equiv}.

%Combined with the lower bound for $\Thr_d$ we obtain a lower bound on randomized one-way communication complexity of $\Eval_{C}$.

%\subsection{From $\Ldim$ to $\AI$}
We now show that if $\Ldim(C) = d$ then one can reduce $\AI$ on $d$ bit inputs to $\Eval_{C}$.
\begin{lemma}
\label{lem:ldim-ai}
Let $C$ be a concept class over $X$ and $d=\Ldim(C)$. There exist two mappings $m_C: \zo^d \rightarrow C$ and $m_X: \bigcup_{i \in [d]} \zo^i \rightarrow X$ such that for every $x$ and $i\in [d]$, the value of $m_C(x)$ on point $m_X(x_{[i-1]})$ is equal to $\AI(x, (i, x_{[i-1]})) = x_i$.
\end{lemma}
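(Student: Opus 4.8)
The plan is to read the two maps directly off a complete mistake tree for $C$ of depth $d$, which exists by the definition of $\LDim(C) = d$. First I would fix such a tree $T$ and identify the nodes of the underlying complete binary tree of depth $d$ with binary strings: the root is the empty string, and a node at depth $k$ is identified with the string $s \in \zo^k$ recording the left/right turns on the root-to-node path (say $0$ for left, $1$ for right). Under this identification the internal nodes are exactly the strings of length at most $d-1$ and the leaves are exactly the strings of length $d$. Write $x_s \in X$ for the point labelling internal node $s$ and $c_t \in C$ for the concept labelling leaf $t$; all these labels already live in $X$ and $C$ respectively since $T$ is a tree over $X$ and $C$.

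Next I would define $m_C(x) = c_x$ for $x = x_1 \cdots x_d \in \zo^d$, i.e.\ the concept at the leaf obtained by reading $x$ itself as a root-to-leaf path, and $m_X(x_{[i-1]}) = x_{x_{[i-1]}}$, the point labelling the internal node $x_1 \cdots x_{i-1}$ (for $i = 1$ this prefix is empty and $m_X$ returns the root label; so $m_X$ is really defined on all strings of length at most $d-1$). Neither map need be injective, which is fine. It then remains to verify the identity $m_C(x)\bigl(m_X(x_{[i-1]})\bigr) = x_i$ for all $x$ and $i \in [d]$: here $m_X(x_{[i-1]})$ is the point $x_p$ at the depth-$(i-1)$ node $p = x_1\cdots x_{i-1}$ lying on the path coded by $x$, and $m_C(x) = c_x$ is the concept at the leaf $x$, which lies in the subtree rooted at $p$. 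That leaf lies in the \emph{right} subtree of $p$ exactly when the path continues from $p$ by a right turn, i.e.\ exactly when $x_i = 1$; otherwise it lies in the left subtree. By the defining property of a mistake tree, $c_x(x_p) = 1$ iff leaf $x$ is in the right subtree of $p$, hence $c_x(x_p) = x_i = \AI(x,(i,x_{[i-1]}))$, which is what we want.

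There is essentially no hard step; the construction is forced once one recalls the definition of a complete mistake tree. The only points requiring care are bookkeeping ones: the bijections between $\zo^d$ and the leaves and between strings of length $< d$ and the internal nodes, and the observation that it is precisely \emph{completeness} of the mistake tree of depth $d$ — guaranteed by $\LDim(C) = d$ — that makes every $x \in \zo^d$ index an honest leaf and every proper prefix index an honest internal node. Combined with \lemmaref{lem:lb-thr} and \theoremref{thr:drdim-cc}/\theoremref{thr:prdim-ccpub}, this reduction immediately yields the claimed $\Omega(\LDim(C))$ lower bound on the communication complexity of $\Eval_C$, and hence on $\SCDP(C)$.
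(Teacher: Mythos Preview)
Your proposal is correct and is essentially identical to the paper's own proof: both read $m_C$ and $m_X$ directly off a complete mistake tree of depth $d$, identifying leaves with strings in $\zo^d$ and internal nodes with proper prefixes, and then invoke the defining property of a mistake tree to conclude $c_x(z_{x_{[i-1]}}) = x_i$. The only differences are cosmetic (depth convention, choice of symbols), and your additional remarks about completeness being what makes every $x$ index an actual leaf are a helpful clarification.
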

\begin{proof}
By the definition of $\Ldim$, there exists a complete mistake tree $T$ over $X$ and $C$ of depth $d$. Recall that a mistake tree over $X$ and $C$ is a binary tree in which each internal node is labelled by a point in $X$ and each leaf is labelled by a concept in $C$.
For $x \in \zo^d$ consider a path from the root of the tree such that at step $j \in [d]$ we go to the left subtree if $x_j=0$ and the right subtree if $x_j = 1$. Such path will end in a leaf which we denote by $\ell_x$ and the concept that labels it by $c_x$. For a prefix $x_{[i-1]}$, let $v_{x_{[i-1]}}$ denote the internal node at depth $i$ on this path (with $v_\emptyset$ being the root) and let $z_{x_{[i-1]}}$ denote the point in $X$ which labels $v_{x_{[i-1]}}$.

 We define the mapping $m_C$ as $m_C(x) = c_x$ for all $x \in \zo^d$ and the mapping $m_X$ as $m_X(y) = z_y$ for all $y \in \bigcup_{i \in [d]} \zo^i$.
By the definition of a mistake tree over $X$ and $C$, the value of the concept $c_x$ on the point $z_{x_{[i-1]}}$ is determined by whether the leaf $\ell_x$ is in the right (1) or the left (0) subtree of the node $v_{x_{[i-1]}}$. Recall that the turns in the path from the root of the tree to $\ell_x$ are defined by the bits of $x$. At the node $v_{x_{[i-1]}}$,  $x_i$ determines whether $\ell_x$ will be in the right or the left subtree. Therefore $c_x(z_{x_{[i-1]}}) = x_i$. Therefore the mapping we defined reduces $\AI$ to $\Eval_{C}$.
\end{proof}

%\subsection{Lower Bound for Augmented Index}

An immediate corollary of \lemmaref{lem:ldim-ai} and \lemmaref{lem:lb-thr} is the following lower bound.
\begin{corollary}
\label{cor:ldim-to-rand}
Let $C$ be a concept class over $X$ and $d=\Ldim(C)$. $\Rand^{\rightarrow}_\eps(C) \geq (1 - \Ent(\eps)) d$.
\end{corollary}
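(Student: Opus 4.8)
The plan is to compose the two lemmas directly, so there is almost nothing to do beyond bookkeeping. Set $d = \Ldim(C)$ and let $\pi$ be an optimal private-coin one-way protocol computing $\Eval_\calC$ with error $\eps$, so that $\CC(\pi) = \Rand^{\rightarrow}_\eps(\calC)$. Using the two mappings $m_C : \zo^d \rightarrow C$ and $m_X : \bigcup_{i \in [d]} \zo^i \rightarrow X$ supplied by \lemmaref{lem:ldim-ai}, I would build a private-coin one-way protocol $\pi'$ for $\AI$ on $d$-bit inputs as follows. On Alice's side, given $x \in \zo^d$, she locally computes $f = m_C(x) \in C$ and sends Bob exactly the message $\pi_A(f; r_A)$. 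On Bob's side, given $(i, x_{[i-1]})$, he locally computes the point $w = m_X(x_{[i-1]}) \in X$ and outputs $\pi_B(\sigma, w; r_B)$ on the received message $\sigma$. Since $m_C$ and $m_X$ are fixed (though not necessarily efficiently computable) functions of the respective parties' inputs, $\pi'$ is a legitimate private-coin one-way protocol with $\CC(\pi') = \CC(\pi)$.

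The correctness analysis is immediate from \lemmaref{lem:ldim-ai}: for every $x$ and $i \in [d]$ we have $f(w) = m_C(x)\bigl(m_X(x_{[i-1]})\bigr) = \AI\bigl(x, (i, x_{[i-1]})\bigr)$, so $\pi'$ outputs the correct answer whenever $\pi$ outputs $f(w)$ correctly, which happens with probability at least $1 - \eps$ over the private coins. Hence $\Rand^{\rightarrow}_\eps(\AI) \leq \Rand^{\rightarrow}_\eps(\calC)$, where the $\AI$ instance is on $d$ bits. Plugging in the lower bound $\Rand^{\rightarrow}_\eps(\AI) \geq (1 - \Ent(\eps))\, d$ from \lemmaref{lem:lb-thr} yields $\Rand^{\rightarrow}_\eps(\calC) \geq (1 - \Ent(\eps))\, d$, which is the claim.

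There is no real obstacle here: all the content lives in \lemmaref{lem:ldim-ai} (extracting an $\AI$-instance from a depth-$d$ complete mistake tree) and in \lemmaref{lem:lb-thr} (the information-theoretic lower bound for $\AI$). The only points worth verifying are that the reduction preserves the one-way direction (Alice speaks, Bob answers), that it incurs no additional communication — true because $m_C$ and $m_X$ are evaluated purely locally on each party's own input — and that it preserves the error parameter exactly rather than up to a constant. I would also note that the same reduction works verbatim in the public-coin model, but the private-coin version stated above is all that is needed.
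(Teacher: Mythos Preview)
Your proposal is correct and is precisely the argument the paper intends: the paper states the corollary as ``an immediate corollary of \lemmaref{lem:ldim-ai} and \lemmaref{lem:lb-thr}'' without spelling out the reduction, and you have simply unpacked that one-line observation by composing the mappings $m_C, m_X$ with an optimal protocol for $\Eval_\calC$ to obtain a protocol for $\AI$ of the same cost and error.
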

A stronger form of this lower bound was proved by \citet{Zhang:11} who showed that the power of Partition Tree lower bound technique for one-way {\em quantum} communication complexity of \citet{Nayak:99} can be expressed in terms of $\Ldim$ of the concept class associated with the communication problem.

\subsection{Applications}
We can now use numerous known lower bounds for Littlestone's dimension of $C$ to obtain lower bounds on sample complexity of private PAC learning. Here we list several examples of known results where $\Ldim(C)$ is (asymptotically) larger than the VC dimension of $C$.
\begin{enumerate}
\item $\Ldim(\Thr_b) = b$ \citep{Littlestone:87}. $\VCDim(\Thr_b) = 1$.
\item Let $\BOX_b^d$ denote the class of all axis-parallel rectangles over $[2^b]^d$, namely all concepts $r_{s,t}$ for $s, t \in [2^b]^d$ defined as $r_{s,t}(x) = 1$ if and only if for all $i \in [d]$, $s_i \leq x_i\leq t_i$. $\Ldim(\BOX_b^d) \geq b \cdot d$ \citep{Littlestone:87}. $\VCDim(\BOX_b^d) = d+1$.
\item Let $\HS_b^d$ denote class of all linear threshold functions over $[2^b]^d$. $\Ldim(\HS_b^d) = b \cdot d(d-1)/2$. This lower bound is stated in \citep{MaassTuran:94}. We are not aware of a published proof and therefore a proof based on counting arguments in \citep{Muroga:71} appears in  \sectionref{sec:ldim-hs}  for completeness.
    $\VCDim(\HS_b^d) = d+1$.
\item Let $\BALL_b^d$ denote class of all balls over $[2^b]^d$, that is all functions obtained by restricting a Euclidean ball in $\R^d$ to $[2^b]^d$. Then $\Ldim(\BALL_b^d) = \Omega(b \cdot d^2)$ \citep{MaassTuran:94b}. $\VCDim(\BALL_b^d) = d+1$.
\end{enumerate}

\section{Separation of $\PRDim$ from $\LDim$}

We next consider the question of whether $\PRDim$ is equal to
$\Ldim$. It turns out that the communication complexity literature \citep{Zhang:11} already
contains the following counter-example separating $\PRDim$ and
$\Ldim$. Define:
$$\Line_p = \{ f : \Z_p^2 \dans \zo : \exists a,b
\in \Z_p^2\ s.t.\ f(x,y) = 1 \text{ iff }ax + b = y\}$$
It is easy to see that $\Ldim(\Line_p) = 2$ (an online learning algorithm only needs two different counterexamples to the constant 0 function to recover the unknown line). It was also shown
\citep{Aaronson:04} that the quantum one-way communication complexity of
$\Eval_{\Line_p}$ is $\Theta(\log p)$. This already implies a
separation between $\Ldim$ and $\PRDim$ using
\theoremref{thr:prdim-ccpub} and the fact that quantum one-way
communication lower-bounds randomized public-coin communication.

We give a new information-theoretic and simpler proof of Aaronson's result for randomized public-coin communication. We start with a brief review of basic notions from information theory.

\subsection{Information theory background}
\label{sec:it-bg}
%For the purposes of our proofs of \lemmaref{lem:lb-thr} and \theoremref{thm:line-lb} we recall some notation and basic information theory facts.
We will use the convention of letting bold-face $\bfa, \bfb$ denote random
variables and regular type $a, b$ denote particular values that those
random variables may take.

Recall the following definitions of entropy (all logarithms are base $2$):
\begin{eqnarray*}
  \text{(Shannon entropy)}  & \Ent(\bfx)  = & \sum_x \Pr[\bfx = x]
  \log \tfrac{1}{\Pr[\bfx = x]} \\
  \text{(R\'{e}nyi entropy or collision entropy)}   & \Ent_2(\bfx)  =
  & \log \frac{1}{\sum_x \Pr[\bfx = x]^2} \\
  \text{(Min-entropy)}&   \Ent_\infty(\bfx)  =  & \min_x \log \tfrac{1}{\Pr[\bfx = x]}
\end{eqnarray*}
Recall that for all random variables $\bfx$ over some universe $X$, it
holds that $\log |X| \geq \Ent(\bfx) \geq \Ent_2(\bfx) \geq
\Ent_\infty(\bfx)$.

The conditional Shannon entropy is defined as $\Ent(\bfx \mid \bfy) =
\Exp_{y \drawnr \bfy} \left[ \Ent(\bfx \mid \bfy = y) \right]$.

The (Shannon) mutual information is defined as $\Info(\bfx; \bfy \mid
\bfz) = \Ent(\bfx \mid \bfz) - \Ent(\bfx \mid \bfy \bfz)$.
The mutual information satisfies the \emph{chain rule}:
$$ \Info(\bfx; \bfy \bfz \mid \bfm) = \Info(\bfx; \bfy \mid \bfm) +
\Info(\bfx; \bfz \mid \bfy \bfm)\ .$$

%This implies that for a sequence of random variables $\bfx_1, \ldots,
%\bfx_n$, it holds that
%$$\Info(\bfx_1, \ldots, \bfx_n ; \bfy) = \sum_{i=1}^n \Info(\bfx_i ;
%\bfy \mid \bfx_1, \ldots, \bfx_{i-1})$$

The Kullback-Leibler divergence (also called relative entropy) is
defined as:
$$\Div(\bfx \parallel \bfx') = \sum_x \Pr[\bfx = x] \log \frac{\Pr[\bfx
  = x]}{\Pr[\bfx' = x]} .$$

For two jointly distributed random variables $\bfx \bfy$, let
$\angles{\bfx} \angles{\bfy}$ denote independent samples from the
marginal distributions of $\bfx$ and $\bfy$.  If conditioned on some
event $E$, we write $(\angles{\bfx} \angles{\bfy} \mid E) =
\angles{\bfx \mid E} \angles{\bfy \mid E}$.

Recall the following characterization of mutual information in terms
of Kullback-Leibler divergence:
\begin{equation}
  \Info(\bfx; \bfy) = \Div(\bfx \bfy \parallel \angles{\bfx} \angles{\bfy}) .
\end{equation}

\begin{lemma}
  \label{lem:min-ent-compress}
  Let $\bfx \bfy$ be jointly distributed random variables.  Suppose
  the support of $\bfy$ has size $2^s$.  Then for every $t > 0$ it holds
  that:
  $$\Pr_{y \drawnr \bfy}[\Ent_\infty(\bfx \mid \bfy = y) <
  \Ent_\infty(\bfx) - s - t] < 2^{-t} .$$
\end{lemma}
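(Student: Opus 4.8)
The plan is to prove this by a direct union bound over the (at most $2^s$) values in the support of $\bfy$. Write $k = \Ent_\infty(\bfx)$, so that $\Pr[\bfx = x] \leq 2^{-k}$ for every $x$. Call a value $y$ in the support of $\bfy$ \emph{bad} if $\Ent_\infty(\bfx \mid \bfy = y) < k - s - t$; unfolding the definition of min-entropy, this means there exists a witness $x_y$ with $\Pr[\bfx = x_y \mid \bfy = y] > 2^{-k+s+t}$. The goal is to show $\Pr_{y \drawnr \bfy}[y \text{ bad}] < 2^{-t}$.

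The only real step is to bound $\Pr[\bfy = y]$ for a single bad $y$. Using Bayes' rule together with the min-entropy bound on $\bfx$,
$$\Pr[\bfy = y] = \frac{\Pr[\bfx = x_y, \bfy = y]}{\Pr[\bfx = x_y \mid \bfy = y]} \leq \frac{\Pr[\bfx = x_y]}{\Pr[\bfx = x_y \mid \bfy = y]} < \frac{2^{-k}}{2^{-k+s+t}} = 2^{-s-t}.$$
Since the support of $\bfy$ has size $2^s$, there are at most $2^s$ bad values $y$, so summing this estimate over all of them yields $\Pr_{y \drawnr \bfy}[y \text{ bad}] = \sum_{y \text{ bad}} \Pr[\bfy = y] < 2^s \cdot 2^{-s-t} = 2^{-t}$, which is exactly the claim. (The degenerate case $k - s - t \leq 0$ is handled automatically: then $2^{-k+s+t} \geq 1$, so no conditional probability can exceed it and there are no bad $y$ at all.)

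There is essentially no obstacle here; the only point worth a moment's care is that distinct bad values $y$ may share the same witness $x_y$. This causes no difficulty, because we never try to relate the witnesses to one another — we bound each $\Pr[\bfy = y]$ in isolation using only the trivial inequality $\Pr[\bfx = x_y, \bfy = y] \leq \Pr[\bfx = x_y] \leq 2^{-k}$, and then use solely the fact that the number of bad $y$'s is at most the size $2^s$ of the support of $\bfy$. An equivalent bookkeeping, if one prefers to phrase it in terms of joint mass, is to note that $\sum_{y \text{ bad}} \Pr[\bfx = x_y, \bfy = y] \leq 2^s \cdot 2^{-k}$ while each such joint term is at least $2^{-k+s+t}\Pr[\bfy = y]$; dividing gives the same conclusion.
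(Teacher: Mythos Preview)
Your proof is correct and follows essentially the same approach as the paper: define $k = \Ent_\infty(\bfx)$, call $y$ bad when the conditional min-entropy drops below $k - s - t$, use Bayes' rule together with $\Pr[\bfx = x_y] \leq 2^{-k}$ to conclude $\Pr[\bfy = y] < 2^{-s-t}$ for each bad $y$, and then sum over the at most $2^s$ values in the support. Your additional remarks on the degenerate case and on shared witnesses are fine but not needed for the argument.
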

\begin{proof}
  Let $k = \Ent_\infty(\bfx)$.  Say that $y$ is \emph{bad} if
  $\Ent_\infty(\bfx \mid \bfy = y) < k - s - t$.  By Bayes' rule
  and the definition of min-entropy, it holds that for all bad $y$
  there exists $x$ such that
  $$\frac{2^{-k}}{\Pr[\bfy = y]} \geq \frac{\Pr[\bfx = x]}{\Pr[\bfy =
    y]} \geq \frac{\Pr[\bfx = x \wedge \bfy = y ]}{\Pr[\bfy = y]} =
  \Pr[\bfx = x \mid \bfy = y] > 2^{-k+s+t} .$$ %
  Therefore $\Pr[\bfy = y] < 2^{-s-t}$.  This implies that:
  $$\Pr[\bfy\text{ is bad}] = \sum_{y\text{ is bad}} \Pr[\bfy = y] < 2^s
  \cdot 2^{-s-t} = 2^{-t}$$
  which proves the lemma.
\end{proof}

Let $\bfx, \bfy$ be random variables over a common universe $X$.
Recall the following two equivalent definitions of statistical distance.
$$\Delta(\bfx, \bfy) = \frac{1}{2} \sum_{x \in X} |\Pr[\bfx = x] -
\Pr[\bfy = x]| = \max_{f:X \rightarrow \zo} |\Pr[f(\bfx) = 1] -
\Pr[f(\bfy) = 1]| .$$
Recall Pinsker's inequality:
\begin{lemma}
  \label{lem:pinsker}
  $\Delta(\bfx, \bfy) \leq \sqrt{\Div(\bfx \parallel \bfy) / 2}$ .
\end{lemma}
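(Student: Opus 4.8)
The plan is to prove Pinsker's inequality in two steps: first reduce the general case to the case of $\zo$-valued (Bernoulli) random variables, and then dispatch the Bernoulli case by a one-variable calculus argument. Throughout, recall that in this paper $\log=\log_2$ and $\Div$ is the base-$2$ divergence.

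First I would handle the \textbf{reduction to the binary case}. Let $\bfx,\bfy$ be random variables over a common universe $X$ and put $A=\{x\in X:\Pr[\bfx=x]\geq\Pr[\bfy=x]\}$. Expanding the $\ell_1$ form of statistical distance and grouping terms according to whether $x\in A$ shows $\Delta(\bfx,\bfy)=\Pr[\bfx\in A]-\Pr[\bfy\in A]$, so the indicator $f=\mathbf{1}_A$ attains the maximum in the variational form of $\Delta$. Let $\mathbf{u}=f(\bfx)$ and $\mathbf{v}=f(\bfy)$ be the induced two-valued variables, and write $p=\Pr[\mathbf{u}=1]=\Pr[\bfx\in A]$ and $q=\Pr[\mathbf{v}=1]$; then $\Delta(\mathbf{u},\mathbf{v})=|p-q|=\Delta(\bfx,\bfy)$. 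On the divergence side, the log-sum inequality (itself an instance of convexity of $t\mapsto t\log t$), applied separately to the blocks $A$ and $X\setminus A$, gives
$$\Div(\bfx\parallel\bfy)\;\geq\;p\log\tfrac{p}{q}+(1-p)\log\tfrac{1-p}{1-q}\;=\;\Div(\mathbf{u}\parallel\mathbf{v}).$$
Hence it suffices to prove $\Delta(\mathbf{u},\mathbf{v})\leq\sqrt{\Div(\mathbf{u}\parallel\mathbf{v})/2}$, i.e.\ the inequality for Bernoulli variables.

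Next I would settle the \textbf{Bernoulli case}. Let $d(p\|q)=p\ln\tfrac{p}{q}+(1-p)\ln\tfrac{1-p}{1-q}$ denote the binary divergence in nats; since $\log=\ln/\ln 2$ and $\ln 2<1$, the base-$2$ quantity $\Div(\mathbf{u}\parallel\mathbf{v})$ is at least $d(p\|q)$, so it is enough to show $d(p\|q)\geq 2(p-q)^2$. Fix $q$ and set $g(p)=d(p\|q)-2(p-q)^2$ on $[0,1]$. Then $g(q)=0$, a direct differentiation gives $g'(p)=\ln\tfrac{p(1-q)}{q(1-p)}-4(p-q)$ so that $g'(q)=0$, and $g''(p)=\tfrac{1}{p(1-p)}-4\geq 0$ because $p(1-p)\leq 1/4$. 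Thus $g$ is convex with a stationary point at $p=q$, whence $g(p)\geq g(q)=0$ for all $p\in[0,1]$, which is exactly the claimed bound. (The boundary cases $p,q\in\{0,1\}$ are handled by the usual conventions $0\log 0=0$ and $\Div=\infty$ when a support containment fails, under which the inequality is trivial.)

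The argument is essentially routine, so there is no genuine obstacle; the only places deserving care are in the reduction step — verifying that a \emph{single} event $A$ simultaneously realizes the statistical distance and, via log-sum, does not increase the divergence — and the bookkeeping of the logarithm base, which is what makes the clean constant $1/2$ (rather than $\ln 2/2$) come out. I would present Step 1 as the conceptual core and treat the convexity computation in Step 2 as a short verification.
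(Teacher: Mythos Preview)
Your proof is correct and follows the standard two-step route (data-processing/log-sum reduction to Bernoulli, then a convexity computation). Note, however, that the paper does not actually prove this lemma: it is stated as a recalled fact (``Recall Pinsker's inequality'') with no proof given, so there is nothing in the paper to compare your argument against. One minor remark on your closing comment about constants: since $\Div$ here is base-$2$, the \emph{sharp} form is $\Delta\le\sqrt{(\ln 2)\,\Div/2}$, and the paper's $\Delta\le\sqrt{\Div/2}$ is the slightly weaker statement you correctly derive via $\Div\ge d_{\mathrm{nats}}$; your bookkeeping is right, but the ``clean'' $1/2$ is the looser constant, not the tighter one.
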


\subsection{Lower Bound on Communication Complexity of $\Line_p$}
To obtain a lower bound on $\PRDim(\Line_p)$ we prove that $\Rand_{1/5}^{\rightarrow,\pub}(\Eval_{\Line_p}) \geq \log p - O(1)$.

\begin{theorem}
  \label{thm:line-lb}
  $\Rand_{1/5}^{\rightarrow,\pub}({\Line_p}) \geq \log p - 7.$
\end{theorem}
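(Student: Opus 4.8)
The plan is to invoke Yao's minimax principle (\equationref{eq:minmax}) and exhibit a single input distribution $\mu$ on which every deterministic one-way protocol of cost below $\log p - 7$ errs with probability more than $1/5$. Take $\mu$ as follows: Alice's concept is $f_{\bfa,\bfb}$ with $(\bfa,\bfb)$ uniform in $\Z_p^2$; Bob's point $(\bfx,\bfy)$ is, with probability $\tfrac12$, a uniformly random point \emph{on} the line (\ie $\bfx$ uniform in $\Z_p$ and $\bfy=\bfa\bfx+\bfb$) and, with probability $\tfrac12$, a uniformly random point of $\Z_p^2$ chosen independently of the line. Thus $\Eval_{\Line_p}=1$ always in the first case, while in the second case $\Eval_{\Line_p}=1$ with probability only $1/p$; in both cases the first coordinate $\bfx$ is uniform and independent of the line.

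Fix a deterministic one-way protocol $\pi$ of cost $c$ and let $\sigma=\sigma(f_{\bfa,\bfb})$ be Alice's message, a function of the line alone taking at most $2^c$ values. Since the line has min-entropy $2\log p$, \lemmaref{lem:min-ent-compress} (applied with the ``compressing'' variable $\sigma$, whose support has size $2^c$, and slack parameter $t$) gives that for all but a $2^{-t}$ fraction of messages $s$, the posterior distribution of the line given $\sigma=s$ has min-entropy at least $2\log p - c - t$; call such $s$ \emph{good}.

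The conceptual heart is a ``lines form an extractor'' statement: if the line is drawn from any distribution over $\Z_p^2$ of min-entropy $k$ and $\bfx\in\Z_p$ is uniform and independent, then the point $(\bfx,\bfa\bfx+\bfb)$ has collision probability at most $\tfrac1p\bigl(2^{-k}+\tfrac1p\bigr)$: in two independent copies of the sampling a collision forces the two first coordinates to agree (probability $1/p$) and then forces $(\bfa-\bfa')\bfx=\bfb'-\bfb$, where the ``same line'' sub-case contributes at most $2^{-k}$ and when the lines differ the value of $\bfx$ is determined uniquely, contributing at most $1/p$. Hence $(\bfx,\bfa\bfx+\bfb)$ has collision entropy at least $2\log p - \log(1+p\,2^{-k})$, so its divergence from the uniform distribution on $\Z_p^2$ is at most $\log(1+p\,2^{-k})$, and by the relation between collision entropy and statistical distance (Cauchy--Schwarz, or \lemmaref{lem:pinsker}) it is within $2^{(c+t-\log p)/2}$ of uniform on $\Z_p^2$ whenever $k\ge 2\log p - c - t$. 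Applying this to the posterior line for a good $s$: conditioned on $\sigma=s$, Bob's point in the ``on-line'' branch is $2^{(c+t-\log p)/2}$-close to uniform on $\Z_p^2$, whereas in the ``off-line'' branch it is \emph{exactly} uniform on $\Z_p^2$ and independent of $\sigma$. Since Bob's output is a deterministic function of $(\sigma,\bfx,\bfy)$, his acceptance probabilities in the two branches (conditioned on the good $s$) differ by at most $2^{(c+t-\log p)/2}$; combined with the facts that the correct answer is $1$ on the on-line branch and is $0$ except with probability $1/p$ on the off-line branch, the conditional error on a good $s$ is at least $\tfrac12\bigl(1-2^{(c+t-\log p)/2}-\tfrac1p\bigr)$.

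Averaging over $s$ (good messages carry probability at least $1-2^{-t}$), the total error under $\mu$ is at least $(1-2^{-t})\cdot\tfrac12\bigl(1-2^{(c+t-\log p)/2}-\tfrac1p\bigr)$. Taking $t$ to be a small constant (\eg $t=2$), this exceeds $1/5$ unless $c\ge\log p-7$; the constant $7$ comfortably absorbs the $2^{-t}$ loss and the $1/p$ term (the latter is harmless since the claimed bound is vacuous once $\log p\le 7$). By Yao's minimax principle this gives $\Rand_{1/5}^{\rightarrow,\pub}(\Line_p)\ge\log p-7$. The main work is in isolating the right hard distribution and proving the extractor lemma, plus carefully tracking the conditioning so that the off-line point is genuinely uniform and independent of Alice's message; the error accounting afterwards is routine.
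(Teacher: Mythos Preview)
Your proposal is correct and follows essentially the same approach as the paper: the same half-on/half-off hard distribution, the same min-entropy compression lemma to isolate ``good'' messages, and the same collision-probability argument (two distinct lines meet in at most one point) to show that conditioned on a good message the on-line point is nearly uniform. The only cosmetic difference is that you condition on each good message and bound the statistical distance directly via collision entropy and Cauchy--Schwarz/Pinsker, whereas the paper packages the same bound through $\Info(\bfx,\bfy;\bfm \mid \text{good})$ before invoking Pinsker on the joint; the constants work out with room to spare either way.
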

\begin{proof}
  We set $\eps=1/5$ and let $\gamma = 2 (\frac{1}{2} - \frac{1}{p} - 2/5)^2$.  For $p \leq 128$ the claim obviously holds so we can assume that $p > 110$. This implies that $\gamma \geq 2/121$ and $\log (1/\gamma) \leq 6$.

  By the min-max principle, it suffices to exhibit an input
  distribution $\mu$ for which
  $$\Dist_\eps^{\rightarrow}(\Eval_{\Line_p}; \mu) \geq \log p -
  1 - \log \tfrac{1}{\gamma} \geq \log p - 7 .$$

  Let us consider the distribution $\mu$ that first samples $b \drawnr
  \zo$ and then outputs a sample from $\mu_b$ defined as follows.  We
  describe Alice's function $f$ by a pair $(a, b) \in \Z_p^2$ that
  defines a line.  The distribution $\mu_0$
  outputs uniform and independent pairs $(a,b), (x,y) \drawnr \Z_p^2$
  while $\mu_1$ outputs uniform $(a,b) \drawnr \Z_p^2$ and then uniform
  $(x,y)$ satisfying $ax + y = b$.

  It is clear that $\Pr_{(a,b,x,y) \sim \mu_0} [ax+y = b] = 1/p$
  while $\Pr_{(a,b,x,y) \sim \mu_1}[ax+y = b] = 1$.  Therefore, it
  must be for any protocol $\pi$ that computes $\Eval_{\Line_p}$ with
  overall error $\eps$ over $\mu$, the following must hold:
  \begin{align*}
    \frac{1}{2}\left(\Pr_{(a,b,x,y) \sim\mu_0}[\pi((a,b), (x,y)) = 1] - \frac{1}{p}\right) + \frac{1}{2}\left(1 -  \Pr_{(a,b,x,y) \sim\mu_1}[\pi((a,b), (x,y)) = 1]\right) \leq \eps.
%    2\eps \label{eq:gahpoidgh}\\
 %    - 2\eps  \label{eq:gahpoidgh2}
  \end{align*}
  And therefore
  \begin{align}
     \Pr_{(a,b,x,y) \sim\mu_1}[\pi((a,b), (x,y)) = 1] - \Pr_{(a,b,x,y) \sim\mu_0}[\pi((a,b), (x,y)) = 1] \geq 1 - \frac{1}{p} - 2\eps.
\label{eq:gahpoidgh}
 %    - 2\eps  \label{eq:gahpoidgh2}
  \end{align}

  We will show that this is impossible for any one-way protocol $\pi$
  that communicates less than $c = \log p - 1 - \log
  \frac{1}{\gamma} \geq \log p - 7$ bits.  Fix any such $\pi$, and let $m$ be the message
  sent by Alice.

  Let us say that $m$ is \emph{good} if $\Ent_\infty(\bfa, \bfb \mid
  \bfm = m) \geq \log p + \log \frac{1}{\gamma} = 2\log p - c -
  1$. Here, all random variables are sampled according to
  $\mu_1$. Then by \lemmaref{lem:min-ent-compress} it holds that:
  \begin{align*}
    \Pr_{(a,b,x,y) \sim \mu_1, m = \pi(a,b)}[m \text{ is not good}] & = \Pr[\Ent_\infty(\bfa,
    \bfb \mid \bfm = m) < 2 \log p - c - 1]   \leq \tfrac{1}{2} .
  \end{align*}
  We next claim that:
  \begin{claim}
    \label{claim:collision}
    For any good $m$, $\Ent_2(\bfx, \bfy \mid \bfm = m) \geq 2 \log p
    - \gamma$ (with respect to the distribution $\mu_1$).
  \end{claim}
  We first prove the theorem using this claim.  It follows that
  $\Ent( \bfx, \bfy \mid \bfm = m) \geq \Ent_2( \bfx, \bfy \mid \bfm =
  m) \geq 2 \log p - \gamma$.  Observe that:
  $$\Info(\bfx, \bfy; \bfm \mid \bfm\text{ is good}) = \Ent(\bfx, \bfy \mid \bfm
  \text{ is good}) - \Ent( \bfx, \bfy \mid \bfm, \bfm \text{ is good}) \leq
  \gamma .$$
  By the divergence characterization of mutual information, this
  implies that
  \begin{equation}
    \Div(\bfx, \bfy, \bfm \mid \bfm \text{ is good} \parallel
    \angles{\bfx, \bfy} \angles{\bfm} \mid \bfm\text{ is good}) \leq \gamma .
    \label{eq:spodbinapoih}
  \end{equation}
  On the other hand, observe that the distribution $(\angles{\bfx,
    \bfy} \angles{\bfm} \mid \bfm\text{ is good})$ is identical to the
  distribution $(\bfx', \bfy', \bfm' \mid \bfm'\text{ is good})$ sampled
  according to the distribution $\mu_0$.  (The definition of $\bfm'$
  good is the same as for $\bfm$, since the marginal distribution of both
  $\mu_0$ and $\mu_1$ on the variables $\bfa, \bfb, \bfm$ is identical.)

  Therefore we have by Pinsker's inequality and
  \equationref{eq:spodbinapoih} that:
  $$\Delta\left( (\bfx, \bfy, \bfm \mid \bfm\text{ is good}),
    (\bfx', \bfy', \bfm' \mid \bfm'\text{ is good}) \right) \leq
  \sqrt{\gamma/2} < \frac{1}{2} - \frac{1}{p} - 2\eps .$$
  Thus, overall we have that:
  \begin{align*}
    \Delta\left((\bfx, \bfy, \bfm), (\bfx', \bfy', \bfm') \right) & \leq
    \Pr[\bfm \text{ is not good}]
    + \Delta\left((\bfx, \bfy, \bfm \mid \bfm\text{ is good}),
    (\bfx', \bfy', \bfm' \mid \bfm'\text{ is good}) \right) \\
  & < 1 - \frac{1}{p} - 2\eps .
  \end{align*}

  However, since the output of Bob depends only on $x, y, m$, it
  therefore follows that the probability that Bob outputs $1$ under
  $\mu_1$ is less than the probability that Bob outputs $1$ under
  $\mu_0$ plus $1 - \frac{1}{p} - 2\eps$, and this contradicts
  \equationref{eq:gahpoidgh}, proving the theorem.
\end{proof}

\begin{proof}[Proof of \claimref{claim:collision}]
  The joint distribution $\bfa, \bfb, \bfx, \bfy, \bfm$ drawn from
  $\mu_1$ can be viewed in the following order: first sample $\bfm$
  from the marginal distribution, then sample $\bfa, \bfb$
  conditioned on $\bfm$, and then sample $\bfx, \bfy$ a random point
  conditioned on $\bfa \bfx + \bfb = \bfy$.

  Conditioned on $\bfm = m$, consider $\bfx_1, \bfy_1$ and $\bfx_2,
  \bfy_2$ sampled independently as just described.  There are two ways
  a collision can occur: either $\bfa_1, \bfb_1$ and $\bfa_2, \bfb_2$
  collide or they do not.  In the first case $(\bfx_1, \bfy_1) =
  (\bfx_2, \bfy_2)$ occurs with probability $1/p$, in the second with
  probability at most $1/p^2$ since there is at most one point where
  the two lines intersect.

  We formalize this as follows:
  \begin{align*}
    \Pr[(\bfx_1, \bfy_1) = (\bfx_2, \bfy_2) \mid \bfm = m] & =
    \Pr[(\bfx_1, \bfy_1) = (\bfx_2, \bfy_2)  \wedge (\bfa_1, \bfb_1)
    = (\bfa_2, \bfb_2) \mid \bfm = m] \\
    & \quad +
    \Pr[(\bfx_1, \bfy_1) = (\bfx_2, \bfy_2) \wedge (\bfa_1, \bfb_1)
    \neq (\bfa_2, \bfb_2) \mid \bfm = m] \\
    & \leq \tfrac{1}{p} \Pr[(\bfa_1, \bfb_1) = (\bfa_2, \bfb_2) \mid
    \bfm = m] +  \tfrac{1}{p^2} \\
    & \leq \frac{1+\gamma}{p^2} .
  \end{align*}
  In the above we used the fact that $\Pr[(\bfa_1, \bfb_1) =
  (\bfa_2, \bfb_2) \mid \bfm = m] \leq \frac{\gamma}{p}$ because
  $m$ is good.

  Finally, $\Ent_2(\bfx, \bfy \mid \bfm = m) \geq \log
  \frac{p^2}{1+\gamma} > 2 \log p - \gamma$.
\end{proof}

\section{Separating pure and $(\alpha, \beta)$-differential privacy}
\label{sec:impure-summary}
We prove that it is possible to learn $\Line_p$ with $(\alpha,
\beta)$-differential privacy and $(\eps, \delta)$ accuracy using
$O(\frac{1}{\eps \alpha} \log \frac{1}{\beta} \log
\frac{1}{\delta})$ samples. This gives further evidence that it is possible to
obtain much better sample complexity with $(\alpha,
\beta)$-differential privacy than pure differential privacy.  Our
separation is somewhat stronger than that implied by our lower bound
for $\Thr_b$ and the upper bound of $O(16^{\log^*(b)})$ in
\citep{BeimelNS:13approx} since for $\Line_p$ we are able to
\emph{match} the non-private sample complexity (when the privacy and
accuracy parameters are constant\footnote{Formally a bound for
  constant $\beta$ is uninformative since weak $1/\beta$ dependence is
  achievable by naive subsampling. In our case the dependence on
  $1/\beta$ is logarithmic and we can ignore this issue.}), even
though, as mentioned in the previous section, randomized one-way
communication complexity and therefore the $\SCDP$ of $\Line_p$ is
asymptotically $\Theta(\log p)$. We note that our learner is not
proper since in addition to lines it may output point functions and
the all zero function.

\begin{theorem}
  \label{thm:improved}
  For any prime $p$, any $\eps, \delta, \alpha, \beta \in (0, 1/2)$,
  one can $(\eps, \delta)$-accurately learn $\Line_p$ with $(\alpha,
  \beta)$-differential privacy using $O(\frac{1}{\eps \alpha} \log
  \frac{1}{\beta} \log \frac{1}{\delta})$ samples.
\end{theorem}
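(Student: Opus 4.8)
The plan is to give a direct learner that, on a sample $S=\{(x_i,y_i,\ell_i)\}_{i\in[n]}$ labelled by an unknown line $f\in\Line_p$, branches on the structure of the multiset $P=\{(x_i,y_i):\ell_i=1\}$ of positively-labelled points, exploiting two elementary facts about $\Z_p^2$: every positive example lies on the unknown line $L^*=f^{-1}(1)$, and any two distinct points determine a unique (non-vertical) line while any two distinct lines of $\Line_p$ meet in at most one point. The learner will run $O(1)$ noisy-threshold tests in the style of Propose--Test--Release / Above-Threshold, each of which is $(\Theta(\alpha),\Theta(\beta))$-differentially private, and compose them; it is the use of the $\beta$ slack that lets us circumvent the pure-DP lower bound $\Rand^{\rightarrow,\pub}(\Line_p)=\Omega(\log p)$ from \theoremref{thm:line-lb}.

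Concretely, let $q$ be the most frequent point of $P$ with multiplicity $m_q$, and let $L_1$ be the line through the most points of $P$. Because a line $L\neq L^*$ can contain points of $P$ only at the single point $L\cap L^*$, in the realizable case $L_1=L^*$ with count $|P|$ and the second-most-popular line has count exactly $m_q$, so the gap $g:=|P|-m_q$ has sensitivity $O(1)$. The algorithm: (i) if $g+\mathrm{Lap}(O(1/\alpha))$ exceeds a threshold $\tau_1=O(\tfrac1\alpha\log\tfrac1\beta)$, output the line $L_1$; (ii) else if $m_q+\mathrm{Lap}(O(1/\alpha))$ exceeds $\tau_2=\Theta(\eps n)$, output the point function $\chi_q$; (iii) else output the all-zero hypothesis $\mathbf 0$. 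For privacy I would argue that each release is stable on the event that its test passes: if $g>\tau_1$ then $L_1$ is the strict plurality line by margin $>\tau_1$, hence the same line on every neighbouring dataset within distance $\Theta(\tau_1)$ (and similarly for $q$), so the standard stability/Above-Threshold analysis gives $(\alpha,\beta)$-differential privacy for the composed algorithm.

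For accuracy I would do a case analysis that holds with probability $1-\delta$ over $S$. If test (i) passes, then $g\gg1$ forces $P$ to have at least two distinct points, so $L_1$ is unambiguously equal to $L^*$ and has error $0$. If (i) fails but (ii) passes, then $g=|P|-m_q=O(\eps n)$, so $P$ is essentially concentrated at $q\in L^*$; since the restriction $\calD|_{L^*}$ then has a clearly dominant atom, the empirical mode $q$ is the true mode and $\mathrm{err}_\calD(\chi_q)=\calD(L^*\setminus\{q\})=O(\eps)$. If both tests fail, then $|P|=g+m_q=O(\eps n)$, so $\Pr_{(x,y)\sim\calD}[f(x,y)=1]=O(\eps)$ and $\mathbf 0$ is $O(\eps)$-accurate. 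The key point making the sample complexity $O(\tfrac1{\eps\alpha}\log\tfrac1\beta\log\tfrac1\delta)$ rather than incurring a $\log p$ is that each of these uniform-convergence steps is a one-sided / ``realizable-or-gap'' statement — consistency of $L^*$, or a dominant atom in $\calD|_{L^*}$, or smallness of a single probability — so it costs only $O(\tfrac1\eps\log\tfrac1\delta)$ samples and never requires a union bound over the $\Theta(p^2)$ candidate hypotheses; combining this with $\tau_1\le\eps n$ and rescaling $\eps$ finishes.

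The main obstacle is the line-recovery step (i). Outputting the unknown line exactly is not differentially private in the pure sense at all, since a single example can change it, so the whole content is to show that approximate differential privacy together with the \emph{sharp} stability of the maximizer makes it work: because two distinct lines of $\Line_p$ meet in at most one point, a single example changes every line's count by at most $1$, so the plurality line cannot flip unless $g$ is small, and the noisy-gap test releases $L_1$ only when $g$ is comfortably large. Pinning down the quantitative bookkeeping here — the exact sensitivity of $g$, the precise $\tau_1$, and verifying that the released object is genuinely constant on the $\tau_1$-ball so that the Above-Threshold/PTR privacy proof applies — is where the real work lies; the remaining branches reduce to standard private histogram/heavy-hitter arguments and routine Chernoff bounds.
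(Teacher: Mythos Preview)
Your three-branch algorithm is a natural idea, but the proof has two real gaps, and the paper's route is substantively different.

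\textbf{Privacy gap.} Your PTR argument for branch (i) rests on the claim that $g=|P|-m_q>\tau_1$ forces the plurality line $L_1$ to be stable. That equivalence $g=(\text{margin of }L_1)$ holds \emph{only in the realizable case}; privacy must hold for every dataset. For a non-realizable $P$ consisting of $k$ copies each of $(0,0),(1,1),(0,1),(1,2)$, the lines $y=x$ and $y=x+1$ both pass through $2k$ points, so $L_1$ is completely unstable, yet $g=|P|-m_q=4k-k=3k$ is large. So $g$ is the wrong proxy; you would need the actual line-margin, and then ``similarly for $q$'' in branch (ii) needs its own stability witness.

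\textbf{Accuracy gap.} In branch (ii) you assert that $g=O(\eps n)$ implies $\calD|_{L^*}$ has a dominant atom and the empirical mode equals the true mode. But $g=|P|-m_q$ is a statement about the \emph{empirical} mode $q$, which is data-dependent; concluding $\calD(L^*\setminus\{q\})=O(\eps)$ from ``only $O(\eps n)$ samples fell in $L^*\setminus\{q\}$'' is exactly a uniform-convergence step over the $p$ candidate singletons. You explicitly promise to avoid a union bound over $\Theta(p^2)$ hypotheses, but you have not said how; the naive argument here costs $\log p$.

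\textbf{How the paper proceeds.} The paper does not attempt to directly certify which of the three hypotheses is correct. Instead it runs a very simple deterministic basic learner (two positives $\to$ line; one positive $\to$ point; none $\to$ zero) $\ell=O(\tfrac1\alpha\log\tfrac1{\beta\delta})$ times on disjoint blocks of size $t$, and releases the modal hypothesis via PTR on the number of blocks that would have to change to alter the mode. The crucial trick --- which your proposal has no analogue of --- is to \emph{randomize the block size} $t=2^k$ with $k$ uniform over an interval of length $\Theta(1/\delta)$. The point is that for any $\calD$ there are only $O(1)$ ``bad'' scales $k$ at which the basic learner's output is unstable (e.g.\ the borderline where one block sees one positive and another sees two); at every other scale one of the three hypotheses is output with probability $\ge 2/3$, making PTR succeed. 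This yields an $(\eps,\delta)$-accurate learner with sample complexity exponential in $1/\delta$, which is then boosted to $\log(1/\delta)$ by running it $O(\log(1/\delta))$ times with $\delta=1/2$ and selecting among the outputs with the exponential mechanism. The stability-via-repetition device sidesteps both of your gaps simultaneously: privacy is about the mode of \emph{learner outputs} (whose stability is measured directly), and accuracy never needs to argue that an empirical maximizer over $p$ points is the true one.
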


We prove this theorem in two steps: first we construct a learner with
poor dependence on $\delta$ and then amplify using the exponential
mechanism to obtain a learner with good dependence on $\delta$.

\subsection{A learner with poor dependence on $\delta$}

\begin{lemma}
  \label{lem:impure}
  For any prime $p$, any $\eps, \delta, \alpha, \beta \in (0, 1/2)$,
  it suffices to take $O(\frac{1}{\eps} 2^{6/\delta} \cdot
  \frac{1}{\alpha} \log \frac{1}{\beta \delta})$ samples in order to
  $(\eps, \delta)$-learn $\Line_p$ with $(\alpha, \beta)$-differential
  privacy.
\end{lemma}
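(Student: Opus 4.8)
The plan is to build a $(\alpha,\beta)$-differentially private learner for $\Line_p$ that works by exploiting the following structural fact about lines: a line in $\Z_p^2$ is determined by \emph{two} distinct points on it. So if the data is labeled by a line $f$ with positive probability mass on the $1$-labeled points under $\calD$, then with enough samples we are very likely to see at least two positively-labeled examples lying on $f$, from which $f$ is recovered exactly. The subtlety is that recovering $f$ exactly is not a differentially private operation (two data points determine it, so changing one sample can change the output arbitrarily), and also that the ``positive mass is large'' assumption may fail. First I would handle the degenerate regime: if the best line $f$ has $\Pr_{x\sim\calD}[f(x)=1] \leq \eps$, then the all-zero hypothesis already has error $\leq\eps$, so outputting it is accurate; and outputting a fixed hypothesis is trivially private. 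The non-degenerate regime is where $\Pr_{x\sim\calD}[f(x)=1] > \eps$.

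In the non-degenerate regime, the key idea is a \emph{stable histogram / stability-based} argument over candidate hypotheses. Concretely: draw $n = O(\frac{1}{\eps}\, 2^{O(1/\delta)}\cdot\frac{1}{\alpha}\log\frac{1}{\beta\delta})$ samples. Partition the $1$-labeled points in the sample; every pair of distinct $1$-labeled points that are actually \emph{consistent} (i.e.\ both lie on a common line) votes for the line they determine. The true line $f$ will, with overwhelming probability, collect many votes, because among $n$ samples we expect $\Omega(\eps n)$ of them to be $1$-labeled points on $f$ and essentially all pairs of those are consistent and vote for $f$. The point is to argue that $f$ is the \emph{unique} heavily-voted candidate, with a large gap to the runner-up, so that a private max/threshold selection (e.g.\ a propose-test-release or a ``choosing mechanism''/stable-selection procedure, or simply Laplace noise on the top count followed by a comparison) outputs $f$ with probability $1-\beta$ and is $(\alpha,\beta)$-private. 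The reason point functions and the all-zero function must be allowed in the output range (as the theorem statement warns) is that these are the ``fallback'' hypotheses when no line is clearly identified: if fewer than two good points are seen, output either the all-zero function or a point function supported on the single heavily-weighted $1$-labeled point, which is still $\eps$-accurate in those edge cases.

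I expect the main obstacle to be the privacy analysis of the selection step together with pinning down where the $2^{O(1/\delta)}$ sample blowup comes from. The $1/\delta$ in the exponent strongly suggests the argument is \emph{not} a clean ``two points determine the line'' count but rather something coarser: perhaps one bucketizes $\Z_p^2$ by grouping the $1$-labeled points by some $\Theta(1/\delta)$-sized combinatorial structure and uses a union bound over $2^{\Theta(1/\delta)}$ events, or one reduces to learning a $\Theta(1/\delta)$-sparse object and applies a stable-histogram bound whose error is $\beta$ but whose sample requirement scales with the number of ``heavy'' buckets, which here could be as large as $2^{\Theta(1/\delta)}$. I would structure the write-up as: (i) reduce to the non-degenerate case; (ii) show that with the stated sample size, with probability $\geq 1-\beta/2$ the true line is identified with a count gap of at least, say, $2\ln(1/\beta)/\alpha$ above every competing candidate; (iii) invoke a private selection mechanism (Laplace on the histogram, output the argmax if it exceeds a noisy threshold, else output all-zero/point-function fallback) and verify $(\alpha,\beta)$-privacy by the standard Laplace-mechanism plus union-bound-over-failure analysis; (iv) combine accuracy of the recovered line (error $0$, hence $\leq\eps$) with the fallback accuracy (error $\leq\eps$ by the degenerate-case reasoning applied locally). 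Steps (i) and (iv) are routine; the real content, and the place the $2^{6/\delta}$ is forced, is step (ii).
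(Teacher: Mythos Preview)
Your high-level outline has the right ingredients (two points determine a line; fallback to a point function or the all-zero hypothesis; private selection via propose-test-release), but there is a genuine gap in step~(ii), and your speculation about the origin of the $2^{6/\delta}$ factor is off the mark.

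First, a minor issue: your pair-voting histogram has high sensitivity. Changing one sample changes up to $n-1$ pairs, so the vote count for the true line can drop by $\Omega(\eps n)$, not by $O(1)$. The paper fixes this by partitioning the sample into $\ell$ blocks of size $t$, running a deterministic ``basic learner'' on each block, and taking a histogram over the $\ell$ block-outputs; this has sensitivity~$1$. You could do the same, so this is not fatal.

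The real gap is the instability between the three regimes---no positive examples, positives concentrated on a single point, and at least two distinct positives---and this is exactly where the $2^{6/\delta}$ arises. Even after partitioning into blocks, for propose-test-release to succeed you need the per-block learner to output the \emph{same} hypothesis with probability bounded away from $1/2$ (say $\geq 2/3$). But for a fixed block size $t$ there may be no such hypothesis: if $t$ sits near the threshold where seeing zero versus one positive is roughly a coin flip, or near the threshold where seeing one versus two distinct positives is roughly a coin flip, the block learner's output is split. Your fallback rules do not help here, because you cannot privately decide which regime you are in when the sample straddles the boundary.

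The paper's key idea, which your proposal is missing, is to \emph{randomize the block size}: set $t=2^k$ with $k$ uniform in an integer interval of length $6/\delta$ starting at $\log(\Theta(1)/\eps)$. For any target line and input distribution, the ``bad'' values of $\log t$ (those in either transition window) occupy at most three integers on the log scale, so a random $k$ avoids them with probability $\geq 1-\delta/2$; for every good $t$ the basic learner is provably $2/3$-stable on a unique $\eps$-accurate hypothesis. The maximum block size is therefore $O(\tfrac{1}{\eps}2^{6/\delta})$, and with $\ell=O(\tfrac{1}{\alpha}\log\tfrac{1}{\beta\delta})$ blocks you obtain the stated bound. The $2^{6/\delta}$ is not a union bound over buckets or a sparse-recovery artifact; it is the price of making the range of $\log t$ wide enough that a uniformly random choice dodges the constant-width unstable windows with probability $1-\delta/2$.
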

\begin{proof}
  At a high level, we run the basic (non-private) learner based on
  VC-dimension $O(\frac{1}{\alpha} \log \frac{1}{\beta})$ times.  We
  use the fact that $\Line_p$ is \emph{stable} in that after a
  constant number of samples, with high probability there is a
  \emph{unique} hypothesis that classifies the samples correctly.
  (This is simply because any two distinct points on a line define the
  line.)  Therefore, in each of the executions of the non-private
  learner, we are likely to recover the same hypothesis.  We can then
  release this hypothesis $(\alpha, \beta)$-privately using the
  ``Propose-Test-Release'' framework.

  The main challenge in implementing this intuition is to eliminate
  corner cases, where with roughly probability $1/2$ the sample set
  may contain two distinct positively labeled points and with
  probability $1/2$ only a single positively labeled point, as this
  would lead to unstable outputs.  We do this by randomizing the
  \emph{number} of samples we take.

  Let $t$ be a number of samples, to be chosen later.  Given $t$
  samples $(x_1, y_1), \ldots, (x_t, y_t)$, our \emph{basic learner}
  will do the following:
  \begin{enumerate}
  \item See if there exist two distinct samples $(x_i, y_i) \neq (x_j,
    y_j)$ that are both classified positively.  If so, output the
    unique line defined by these points.
  \item Otherwise, see if there exists any sample $(x_i, y_i)$
    classified positively.  Output the point function that outputs $1$
    on $(x_i, y_i)$ and zero elsewhere.
  \item Otherwise, output the constant $0$ hypothesis.
  \end{enumerate}

  \newcommand{\freq}{\mathsf{freq}} %
  Our \emph{overall learner} uses the basic learner as follows: first
  sample an integer $k$ uniformly from the interval
  $[\log(\ln(3/2)/\eps), \log(\ln(3/2)/\eps) + 6/\delta]$ and set
  $t = 2^k$. Set $\ell = \max\{ \frac{12}{\alpha} \ln \frac{2}{\beta
    \delta} + 13, 72 \ln \frac{4}{\delta} \}$.  Set $n = t \ell$.
  \begin{enumerate}
  \item Take $n$ samples and cut them into $\ell$ subsamples of size
    $t$, and run the basic learner on each of these.
  \item Let the returned hypotheses be $h_1, \ldots, h_\ell$.  Define
    $\freq(h_1, \ldots, h_\ell) = \argmax_h |\{h_i = h \mid i \in
    [\ell]\}|$, \ie the most frequently occurring hypothesis, breaking
    ties using lexicographical order.  We define $\overline{h} = \freq(h_1,
    \ldots, h_\ell)$.  Compute $c$ to be the smallest number of $h_i$
    that must be changed in order to change the most frequently
    occuring hypothesis, \ie
    $$c = \min \left\{ c \mid \exists h'_1, \ldots, h'_\ell, \freq(h'_1,
      \ldots, h'_\ell) \neq \overline{h}, c = |\{i \mid h_i \neq
      h'_i\}| \right\} .$$
  \item If $c + \Lambda(1/\alpha) > \frac{1}{\alpha} \ln \frac{1}{2\beta}
    + 1$ then output $\overline{h}$, otherwise output the constant $0$
    hypothesis.
  \end{enumerate}
  Here, $\Lambda(1/\alpha)$ denotes the Laplace distribution, whose
  density function at point $x$ equals $\alpha e^{- \alpha |x|}$.  It
  is easy to check that adding $\Lambda(1/\alpha)$ to a sum of Boolean
  values renders that sum $\alpha$-differentially private
  \citep{DworkMNS:06}.

  We analyze the overall learner.  Observe that once $t$ is fixed, the
  basic learner is deterministic.

  \paragraph{Privacy:} we prove that the overall learner is $(\alpha,
  \beta)$-differentially private.  Consider any two neighboring
  inputs $x, x' \in (\Z_p^2 \times \zo)^n$.  There are two cases:
  \begin{itemize}
  \item The most frequent hypothesis $\overline{h}$ returned by
    running the basic learner on the $\ell$ subsamples of $x, x'$ is
    the same.  In this case, there are two possible outputs of the
    mechanism, either $\overline{h}$ or the $0$ hypothesis.  Due to
    the fact that we decide between them using a count with Laplace
    noise and the count has sensitivity $1$, the probability assigned
    to either output changes by at most a multiplicative $e^{-\alpha}$
    factor between $x, x'$.
  \item The most frequent hypotheses are different.  In this case $c =
    1$ for both $x, x'$.  The probability of \emph{not} outputting
    $0$ in either case is given by
    $$\Pr[\Lambda(1/\alpha) > \frac{1}{\alpha} \ln \frac{1}{2\beta}] = \beta .$$
    Otherwise, in both cases they output $0$.
  \end{itemize}

  \paragraph{Accuracy:} we now show that the overall learner
  $(\eps,\delta)$-PAC learns.  We claim that:
  \begin{claim}
    \label{claim:randomt}
    Fix any hidden line $f$ and any input distribution $\calD$.  With
    probability $1-\delta/2$ over the choice of $t$, there is a
    \emph{unique} hypothesis with error $\leq \eps$ that the basic
    learner will output with probability at least $2/3$ when given $t$
    independent samples from $\calD$.
  \end{claim}
  Let us first assume this claim is true.  Then it is easy to show
  that the overall learner $(\eps, \delta)$-learns: suppose we are in
  the $1-\delta/2$ probability case where there is a unique hypothesis
  with error $\leq \eps$ output by the basic learner.  Then, by
  Chernoff, since $\ell \geq 72 \ln \frac{4}{\delta}$ it holds that
  with probability $1-\delta/4$ at least $7/12$ fraction of the basic
  learner outputs will be this unique hypothesis.  This means that the
  number of samples that must be modified to change the most frequent
  hypothesis is $c \geq \frac{\ell}{12}$.  Therefore since $\ell \geq
  \frac{12}{\alpha} \ln \frac{1}{\beta \delta} + 13$, in this case the
  probability that the overall learner does not output this unique
  hypothesis is bounded by:
  $$\Pr[c + \Lambda(\tfrac{1}{\alpha}) \leq \tfrac{1}{\alpha} \ln
  \tfrac{1}{2\beta} + 1] \leq
  \Pr[\Lambda(\tfrac{1}{\alpha}) < - \tfrac{1}{\alpha} \ln
  \tfrac{2}{\delta}] = \tfrac{\delta}{4} .$$
  Thus the overall probability of not returning an $\eps$-good
  hypothesis is at most $\delta$.

  \newcommand{\None}{\mathsf{None}}
  \newcommand{\One}{\mathsf{One}}
  \newcommand{\Two}{\mathsf{Two}}

  \paragraph{Proof of \claimref{claim:randomt}}  Fix a concept $f$ defined by
  a line given by $(a, b) \in \Z_p^2$ and any input distribution
  $\calD$ over $\Z_p^2$.

  Define the following events $\None_t, \One_t, \Two_t$ parameterized
  by an integer $t > 0$ and defined over the probability space of
  drawing $(x_1, y_1), \ldots, (x_t, y_t)$ independently from $\calD$:
  \begin{itemize}
  \item $\None_t$ is the event that all of the $(x_i, y_i)$ are not on
    the line $(a,b)$.
  \item $\One_t$ is the event that there exists some
    $(x_i, y_i)$ on the line $(a,b)$, and furthermore for every other
    $(x_j, y_j)$ on the line $(a,b)$ is in fact equal to $(x_i, y_i)$.
  \item $\Two_t$ is the event that there exists distinct $(x_i, y_i)
    \neq (x_j, y_j)$ that are both on the line $(a,b)$.
  \end{itemize}
  Next we will show that with probability $1-\delta/2$ over the choice
  of $t$, one of these three events has probability at least $2/3$,
  and then we show that this suffices to imply the claim.

  Let $r = \Pr_{(x,y) \sim \calD}[f(x,y) = 1]$, let $q_{x,y} =
  \Pr_{(x',y') \sim \calD}[(x',y') = (x, y)]$, and let $q =
  \max_{(x,y) \in f^{-1}(1)} q_{x,y}$.  We can characterize the
  probabilities of $\None_t, \One_t, \Two_t$ in terms of $r, q, t$ as
  follows:
  \begin{align*}
    \Pr[\None_t] & = (1 - r)^t , \\
    \Pr[\One_t] & = \sum_{(x,y) \in f^{-1}(1)} ((1-r+q_{x,y})^t - (1-r)^t) , \\
    \Pr[\Two_t] & = 1 - \Pr[\None_t] - \Pr[\One_t] .
% = 1 - (1-r)^t -
%     \sum_{(x,y) \in f^{-1}(1)} ((1-r-q_{x,y})^t - (1-r)^t)
  \end{align*}
  The characterizations for $\None_t, \Two_t$ are obvious.  The
  characterization of $\One_t$ is exactly the probability over all
  $(x,y) \in f^{-1}(1)$ that all samples are either labeled $0$ or
  equal $(x,y)$, excluding the event that they are all labeled $0$.

  From the above and by considering the $(x,y)$ maximizing $q_{x,y}$,
  we have the following bounds:
  \begin{align}
    \Pr[\None_t] & \geq 1 - rt , \label{eq:none} \\
    \Pr[\One_t] & \geq (1-r+q)^t - (1-r)^t \geq 1 - (r-q)t - e^{-rt}
    \label{eq:one} ,\\
    \Pr[\Two_t] & \geq (1 - e^{-rt/2})(1 - e^{-(r-q)t/2}) .\label{eq:two}
  \end{align}
  The first two follow directly from the fact that for all $x \in \R$
  it holds that $1-x \leq e^x$ and also for all $x \in [0, 1]$ and $y
  \geq 1$ it holds that $(1-x)^y \geq 1 - x y$.  \equationref{eq:two}
  follows from the following argument.  $\Two_t$ contains the
  sub-event where there is at least one positive example in the first
  $t/2$ samples and a different positive example in the second $t/2$
  samples.  The probability of this sub-event is lower-bounded
  by $(1 - (1-r)^{t/2})(1 - (1-r+q)^{t/2}) \geq (1 - e^{-rt/2})(1 -
  e^{-(r-q)t/2})$.

% The third requires a
%   bit of justification: one can check that
%   $$\Pr[\One_t] = \sum_{(x,y)
%     \in f^{-1}(1)} \sum_{i=1}^{t} (1-r)^{i-1} q_{x,y}
%   (1-r+q_{x,y})^{t-i}$$
%   and by considering the worst $(x,y)$ we have that:
%   $$\Pr[\One_t] \leq (1-r+q)^t \sum_{i=1}^t
%   \frac{(1-r)^{i-1}}{(1-r+q)^i} \leq e^{-(r-q)t}$$

  \paragraph{$t$ is good with high probability.} Let us say that $t$
  is good for $\None_t$ if $t \leq \frac{1}{3r}$.  We say $t$ is good
  for $\One_t$ if $t \in [\frac{\ln 6}{r}, \frac{1}{6(r-q)}]$.  We say
  $t$ is good for $\Two_t$ if $t \geq \frac{2 \ln 6}{r-q}$.  (It is
  possible that some of these events may be empty, but this does not
  affect our argument.)  Using \equationref{eq:none}, \equationref{eq:one} and
  \equationref{eq:two}, it is clear that if $t$ is good for some event,
  then the probability of that event is at least $2/3$.

  Let us say $t$ is good if it is good for any one of $\None_t,
  \One_t, \Two_t$. $t$ is good means the following when viewed on the
  logarithmic scale:
  $$\log t \in [0, \log \tfrac{1}{r} - \log 3] \ \cup\ [\log
  \tfrac{1}{r} + \log \ln 6, \log \tfrac{1}{r-q} - \log 6] \ \cup\ [\log \tfrac{1}{r-q} + \log (2
  \ln 6), \infty) .$$
  But this means that $t$ is bad on the logarithmic scale is equivalent to:
  \begin{equation}
    \label{eq:logtbad}
    \log t \in (\log \tfrac{1}{r} - \log 3, \log \tfrac{1}{r} + \log \ln
    6) \ \cup \ (\log \tfrac{1}{r-q} - \log 6, \log \tfrac{1}{r-q} +
    \log (2 \ln 6)) .
  \end{equation}
  Thus, for any $r$, there are at most $3$ integer values of $\log t$
  that are bad.  But recall that $t = 2^k$ where $k$ is uniformly
  chosen from $\{\log(\ln(3/2)/\eps), \ldots, \log(\ln(3/2)/\eps)
  + 6/\delta\}$.  Therefore the probability that $k = \log t$ is one of
  the bad values defined in \equationref{eq:logtbad} is at most $\delta/2$.

  \paragraph{When $t$ is good, basic learner outputs unique accurate
    hypothesis.} To conclude, we argue that when $t$ is good then the
  basic learner will output a unique hypothesis with error $\leq
  \eps$ with probability $\geq 2/3$.  This is obvious when $t$ is
  good for $\Two_t$, since whenever the basic learner sees two points
  on the line, it recovers the exact line.  It is also easy to see
  that when $t$ is good for $\None_t$, the basic learner outputs the
  $0$ hypothesis with probability $2/3$, and this has error at most
  $\eps$ since
  $$2/3 \leq \Pr[\None_t] = (1-r)^t \leq e^{-rt} \Rightarrow r \leq
  \ln(3/2) / t \leq \eps .$$

  It remains to argue that the basic learner outputs a unique
  hypothesis with error at most $\eps$ when $t$ is good for
  $\One_t$.  Observe that we have actually set the parameters so that
  when $t$ is good for $\One_t$, it holds that:
  \begin{equation}
    \label{eq:oneactual}
    \Pr[\One_t \wedge \text{unique positive point is }(x_{\max}, y_{\max})]
    \geq 2/3 ,
  \end{equation}
  where $(x_{\max}, y_{\max}) = \argmax_{(x,y) \in f^{-1}(1)}
  q_{x,y}$.  Therefore, for such $t$, the basic learner will output
  the point function that is positive on exactly $(x_{\max},
  y_{\max})$ with probability at least $2/3$.

  To show that this point function has error at most $\eps$, it
  suffices to prove that
  $$\Pr[f(x,y) = 1 \wedge (x,y) \neq (x_{\max},
  y_{\max})] = r - q \leq \eps .$$ %
  From \equationref{eq:oneactual}, we deduce that:
  $$2/3 \leq (1 - r + q)^t - (1-r)^t \leq e^{-(r-q)t} \Rightarrow r-q
  \leq \ln(3/2) / t \leq \eps .$$
  This concludes the proof.
\end{proof}

\noindent
{\bf Improving dependence on $\delta$:}
We now improve the exponential dependence on $1/\delta$ in
\lemmaref{lem:impure} to prove \theoremref{thm:improved}.
We will use the algorithm of \lemmaref{lem:impure} with $\delta = 1/2$
and accuracy $\eps/2$ repeated $k=O(\log(1/\delta))$ times independently in order to construct a set $H$ of $k$ hypotheses. We then draws a fresh sample $S$ of $O(\log(1/\delta)/(\eps\al))$ examples and select one of the hypotheses based on their error on $S$ using the exponential mechanism of \citep{McSherryTalwar:07}. This mechanism chooses a hypothesis from $H$ with probability proportional to $e^{ -\alpha\cdot\err_S(h) / 2}$, where $\err_S(h)$ is $\err_S(h) = |\{ (x,\ell) \in S \mid h(x) \neq \ell \}|$.
Simple analysis \citep[\eg][]{KLNRS:11,BeimelNS:13} then shows that the selection mechanism is $\alpha$-differentially private and outputs a hypothesis that has error of at most $\eps$ on $\calD$ with probability at least $1-\delta$.
Note that each of the $k$ copies of the low-confidence algorithm and
the exponential mechanism are run on disjoint sample sets and
therefore there is no privacy loss from such composition. Hence the
resulting algorithm is also $(\alpha,\beta)$-differentially
private. We include formal details in \sectionref{app:boost-delta}.

\section{Conclusions and Open Problems}
Our work continues the investigation of the costs of privacy in standard classification models initiated by \citet{KLNRS:11}. Our main result is a new connection to communication complexity that provides a rich set of results and techniques developed in the past 30 years in communication complexity to the study of differentially private learning. Most notably, we show that tools from information theory can be used to resolve several fundamental questions about $\SCDP$. Our connection relies on the characterization of $\SCDP$ using natural notions of representation dimension introduced by \citet{BeimelNS:13}. We remark, however, that our lower bounds can also be proved more directly without relying on the results in \citep{BeimelNS:13}. Implicit in our lower bounds is a lower bound on the mutual information between random examples and the hypothesis output by the private learning algorithm. On the other hand, it is known and easy to show that $\alpha$-differential privacy gives an upper bound of  $O(\alpha \cdot n)$ on the mutual information between $n$ data points and the output of the algorithm (see for example \citep{DworkFHPRR15:arxiv}).

While we focus on differential privacy our lower bounds have immediate implications in other settings where information about the examples needs to be stored or transmitted for the purposes of classification, such as distributed computation, streaming and low-memory computation.

Our work also demonstrates that PAC learning with approximate differential privacy can be substantially more sample efficient then learning with pure differential privacy. However our understanding of classification with approximate differential privacy still has some major gaps. Most glaringly, we do not know whether the sample complexity of $(\alpha,\beta)$-differentially private PAC learning is different from the VC dimension (up to a $\poly(1/\alpha,\log(1/\beta))$ factor).  Also the separation from the pure differential privacy holds only for PAC learning and we do not know if it is also true for agnostic learning.

\section*{Acknowledgements}
We are grateful to Kobbi Nissim for first drawing our attention to the intriguing problem of
understanding the relationship between probabilistic representation
dimension and VC dimension, and for valuable discussions regarding the
sample complexity of privately learning threshold functions.
We thank Nina Balcan and Avrim Blum who brought up the relationship of
our bounds for intervals in \sectionref{sec:cc-equiv-app} to those based
on Littlestone's dimension. Their insightful comments and questions
have lead to our result in \theoremref{thm:ldinformal}. We also thank Sasha Rakhlin and Sasha Sherstov for useful suggestions and references.

D.X. was supported in part by the French ANR Blanc program under contract
ANR-12-BS02-005 (RDAM project), by NSF grant CNS-1237235, a gift from
Google, Inc., and a Simons Investigator grant to Salil Vadhan.

\bibliographystyle{abbrvnat}
\bibliography{ppac-cc}

%%%%%%%%%%%%%%%%%%%%%%%%%%%%%%%%%%%%%%%%%%%%%%%%%%%%%%%
%% APPENDIX
%%%%%%%%%%%%%%%%%%%%%%%%%%%%%%%%%%%%%%%%%%%%%%%%%%%%%%%

\appendix

\section{Proof of \lemmaref{lem:lb-thr}}
\label{sec:ppac-cc-ai}
%Here we give a simple and self-contained proof that the concept class of threshold functions on an interval has $\Rand^{\rightarrow}_{\eps}(\Eval_{\Thr_d}) \geq (1 - \Ent(\eps)) d$ where $\Ent$ is the binary entropy function.

As before, for $\eps \in [0, 1]$, we let $\Ent(\eps)$ denote the binary
entropy of $\eps$, namely the entropy of the Bernoulli random variable
that equals $1$ with probability $\eps$.  We recall Fano's inequality
(for the Boolean case):
\begin{lemma}[Fano's inequality]
  Let $\bfx, \bfy$ be Boolean random variables such that $\Pr[ \bfx =
  \bfy ] \geq 1 - \eps$.  Then it holds that $\Ent(\bfx \mid \bfy)
  \leq \Ent(\eps)$.
\end{lemma}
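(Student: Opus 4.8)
The plan is to give the standard chain-rule proof, with the single spot where Booleanness is genuinely used made explicit. First I would introduce the error random variable $\mathbf{e}$, defined to equal $1$ when $\bfx \neq \bfy$ and $0$ otherwise, so that $\Pr[\mathbf{e} = 1] = \Pr[\bfx \neq \bfy] =: q \leq \eps$, and $\mathbf{e}$ is Bernoulli with parameter $q$, hence $\Ent(\mathbf{e}) = \Ent(q)$. The argument rests on two elementary facts about $\mathbf{e}$: (i) $\mathbf{e}$ is a deterministic function of the pair $(\bfx, \bfy)$, so $\Ent(\mathbf{e} \mid \bfx, \bfy) = 0$; and (ii) — the only place the hypothesis that the variables are \emph{Boolean} is used — given $\bfy$ and $\mathbf{e}$ one can recover $\bfx$ (namely $\bfx = \bfy$ if $\mathbf{e} = 0$ and $\bfx = 1 - \bfy$ if $\mathbf{e} = 1$), so $\Ent(\bfx \mid \bfy, \mathbf{e}) = 0$. (For an alphabet $\Sigma$ of size larger than $2$ this second identity fails and one instead picks up the extra term $q \log(|\Sigma| - 1)$, giving the general Fano bound.)

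Next I would expand $\Ent(\bfx, \mathbf{e} \mid \bfy)$ via the chain rule for conditional entropy in the two possible orders. Grouping $\bfx$ first: $\Ent(\bfx, \mathbf{e} \mid \bfy) = \Ent(\bfx \mid \bfy) + \Ent(\mathbf{e} \mid \bfx, \bfy) = \Ent(\bfx \mid \bfy)$ by fact (i). Grouping $\mathbf{e}$ first: $\Ent(\bfx, \mathbf{e} \mid \bfy) = \Ent(\mathbf{e} \mid \bfy) + \Ent(\bfx \mid \mathbf{e}, \bfy) = \Ent(\mathbf{e} \mid \bfy)$ by fact (ii). Comparing the two expressions yields $\Ent(\bfx \mid \bfy) = \Ent(\mathbf{e} \mid \bfy)$.

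It then remains only to bound $\Ent(\mathbf{e} \mid \bfy)$. Since conditioning does not increase entropy, $\Ent(\mathbf{e} \mid \bfy) \leq \Ent(\mathbf{e}) = \Ent(q)$. Finally $q \leq \eps$, and in the regime $\eps \leq 1/2$ (which is the one relevant here, since $\eps$ is a protocol error probability in the application to $\AI$) the binary entropy function is non-decreasing on $[0, 1/2]$, so $\Ent(q) \leq \Ent(\eps)$. Chaining everything: $\Ent(\bfx \mid \bfy) = \Ent(\mathbf{e} \mid \bfy) \leq \Ent(\mathbf{e}) = \Ent(q) \leq \Ent(\eps)$, which is the claimed bound.

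Since each step is a one-line invocation of a standard fact (chain rule for conditional entropy, "conditioning reduces entropy", monotonicity of $\Ent$ on $[0,1/2]$), there is no substantive obstacle; the only two points I would take care to flag in the write-up are that the reconstruction $\bfx = \bfy \oplus \mathbf{e}$ in fact (ii) truly requires the alphabet to be binary, and that the final monotonicity step requires $\eps \leq 1/2$.
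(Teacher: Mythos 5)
Your proof is correct, and it is the standard textbook derivation of (binary) Fano's inequality: introduce the error indicator $\mathbf{e}$, expand $\Ent(\bfx,\mathbf{e}\mid\bfy)$ by the chain rule in both orders, use that $\mathbf{e}$ is determined by $(\bfx,\bfy)$ and that $\bfx$ is determined by $(\bfy,\mathbf{e})$ in the Boolean case, and finish with ``conditioning reduces entropy'' and monotonicity of $\Ent(\cdot)$ on $[0,1/2]$. The paper itself offers no proof of this lemma --- it is simply recalled as a known fact before being applied in the proof of the $\AI$ lower bound --- so there is nothing to compare against; your write-up supplies exactly the argument the paper implicitly relies on. Your flag that the final step needs $\eps\le 1/2$ is a legitimate observation: as literally stated the lemma omits this hypothesis (and fails without it, e.g.\ for $\eps$ near $1$), but in the application $\eps$ is a protocol error probability below $1/2$, so the omission is harmless.
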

We can now prove \lemmaref{lem:lb-thr}.
\begin{proof}
 By the Yao's \citeyearpar{Yao:1977} min-max principle ,
  $$\Rand^{\rightarrow}_\eps(\AI) \geq \Dist^{\rightarrow}_\eps(\AI;
  \mu) ,$$
  where $\mu$ is the input distribution that samples uniform $\bfx
  \drawnr \zo^d$ and uniform $\bfi \drawnr [d]$.

  Consider any deterministic protocol $\pi$ computing $\AI$ with error
  at most $\eps$ over $\mu$, and suppose that $\pi$ uses $\delta \cdot
  d$ communication.  Let $(\bfx, \bfi) \sim \mu$.  Then
  $\Info(\pi_A(\bfx) ; \bfx) \leq \delta \cdot d$.  By the chain rule for mutual information it follows that
  \begin{align*}
    \delta \cdot d & \geq \Info(\pi_A(\bfx); \bfx) \\
    & = \sum_{i=1}^d \Info(\bfx_i; \pi_A(\bfx) \mid \bfx_1, \ldots, \bfx_{i-1}) \\
    & = \sum_{i=1}^d (\Ent(\bfx_i \mid \bfx_1, \ldots, \bfx_{i-1}) - \Ent(\bfx_i \mid \pi_A(\bfx), \bfx_1, \ldots, \bfx_{i-1})) \\
    & = \sum_{i=1}^d (1 - \Ent(\bfx_i \mid \pi_A(\bfx), \bfx_1, \ldots, \bfx_{i-1})) .
  \end{align*}
  We therefore deduce that (for $\bfi$ uniform over $[d]$):
  \begin{equation}
    \label{eq:aspgdoiha}
    1 - \Ent(\bfx_\bfi \mid \pi_A(\bfx), \bfx_1, \ldots, \bfx_{\bfi-1}, \bfi) \leq
    \delta .
  \end{equation}
  By Fano's Inequality, we know that if the probability of guessing
  $\bfx_\bfi$ given $\pi_A(\bfx), \bfx_1, \ldots, \bfx_{\bfi-1}, \bfi$ (which is exactly Bob's
  input) is at least $1 - \eps$, then $\Ent(\bfx_\bfi \mid \pi_A(\bfx), \bfx_1,
  \ldots, \bfx_{\bfi-1}, \bfi) \leq \Ent(\eps)$.  From this and
  \autoref{eq:aspgdoiha}, we deduce that $\delta \geq 1 - \Ent(\eps)$.
\end{proof} 

\section{Ldim Lower Bound for Halfspaces}
\label{sec:ldim-hs}
%TO BE FIXED.

Recall that $\HS_b^d$ denotes the concept class of all halfspaces over $I_b^d$, where $I_b=\{0,1,\ldots,2^b-1\}$. Our proof is based on the technique used in \citep{Muroga:71} to prove a lower bound of $2^{d(d-1)}$ on the total number of distinct halfspaces over $\zo^d$. As a first step we prove the following simple lemma (for $b=1$ it can also be found in \citep{Muroga:71}).
\begin{lemma}
\label{lem:hs2order}
For an integer $b\geq 1$ let $f$ be a halfspace over $I_b^d$. There exists a vector $w \in \Z^d$ and an integer $\theta \in \Z$ such that:
\begin{itemize}
\item $(w,\theta)$ represents $f$, that is $f(x)=1$ if and only if $ w \cdot x \geq \theta$;
\item for every two distinct $x,x' \in I_b^d$, $w \cdot x \neq w \cdot x'$.
\end{itemize}
We refer to such a representation of $f$ as {\em collision-free}.
\end{lemma}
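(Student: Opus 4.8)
The plan is to take any real-valued representation of $f$ and perturb it, first to a rational vector and then to an integer vector, in such a way that it still computes $f$ on $I_b^d$ while simultaneously placing the finite set $I_b^d$ in general position relative to the weight vector (so that no two points get the same weighted sum). Write $P=\{x\in I_b^d:f(x)=1\}$ and $N=I_b^d\setminus P$. First I would dispose of the degenerate cases: if $P=\emptyset$ or $N=\emptyset$ then $f$ is constant on $I_b^d$, and one can take $w=(1,2^b,2^{2b},\dots,2^{(d-1)b})$, which is injective on $I_b^d$ (it is just the base-$2^b$ encoding), together with a threshold $\theta$ placed strictly above, respectively at or below, the finite set of values $\{w\cdot x:x\in I_b^d\}$. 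So from now on assume both $P$ and $N$ are nonempty.

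Since $f$ is a halfspace there exist $v\in\R^d$ and $t\in\R$ with $v\cdot x\ge t$ for $x\in P$ and $v\cdot x<t$ for $x\in N$; because $I_b^d$ is finite this yields a strictly positive margin $\mu:=\min_{x\in P}v\cdot x-\max_{x'\in N}v\cdot x'>0$. A short estimate (using $\|x\|_1<d\,2^b$ for $x\in I_b^d$) shows that any $w$ with $\|w-v\|_\infty<\mu/(2d\,2^b)$ still satisfies $\min_{x\in P}w\cdot x>\max_{x'\in N}w\cdot x'$, i.e. every such $w$ separates $P$ from $N$ for a suitable threshold. Next, the ``bad'' weight vectors for collision-freeness form the set $B=\bigcup\{w:w\cdot(x-x')=0\}$, the union over the finitely many distinct pairs $x,x'\in I_b^d$; each such set is a hyperplane through the origin (its normal $x-x'$ is a nonzero integer vector), so $B$ has Lebesgue measure zero. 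Hence the open ball $\{w:\|w-v\|_\infty<\mu/(2d\,2^b)\}$ minus $B$ is a nonempty open set, and since $\Q^d$ is dense I can choose a rational $w'$ inside it: $w'$ separates $P$ from $N$ and is collision-free on $I_b^d$.

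Finally, clearing denominators gives $w:=q\,w'\in\Z^d$ for a suitable positive integer $q$; scaling by a positive integer preserves both the separation of $P$ from $N$ and the property $w\cdot x\ne w\cdot x'$ for distinct $x,x'$. Setting $\theta:=\min_{x\in P}w\cdot x$, which is an integer since $w$ and all $x\in I_b^d$ are integral, one gets $f(x)=1\iff w\cdot x\ge\theta$ for $x\in I_b^d$, and by construction $w\cdot x\ne w\cdot x'$ for all distinct $x,x'\in I_b^d$; thus $(w,\theta)$ is a collision-free representation of $f$.

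I do not expect a genuine obstacle: the only thing to watch is that the three requirements --- correct sign pattern, integrality, and no collisions --- must hold simultaneously, but the first is an open condition, the obstruction to the third has measure zero, and integer scaling is harmless, so the density argument handles all three at once. The constant-$f$ cases are the only place one must pause, and the explicit base-$2^b$ weight vector settles them.
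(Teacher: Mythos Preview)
Your argument is correct. The key steps---positive margin from finiteness, collision set $B$ as a finite union of proper hyperplanes (closed, measure zero), so the open ball around $v$ minus $B$ is open and nonempty, then density of $\Q^d$ and clearing denominators---all go through, and your handling of the constant-$f$ cases via the base-$2^b$ weight vector is clean.

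The paper takes a different, fully constructive route: start from an integer representation $(w',\theta')$, scale it by a large power of two to create an integer margin around the threshold, and then add to each coordinate $w_i$ a distinct small power of two. The added perturbation is too small to flip any sign (because of the manufactured margin), while the low-order bits of $w\cdot x$ now encode $x$ itself, forcing injectivity. So the paper never invokes measure or density; it writes down the collision-free $(w,\theta)$ explicitly. The advantage of the paper's approach is that it is elementary and yields explicit integer weights; the advantage of yours is that it is conceptually uniform (the same density argument would work over any finite domain and any notion of ``collision''), and it sidesteps the bookkeeping of matching the perturbation size to the margin, which in the paper's write-up is stated for $b=1$ and needs a small adjustment for general $b$.
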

\begin{proof}
Let $(w',\theta')$ be any integer weight representation of $f$ (such representation always exists for a halfspace over integer points). We first create a margin around the decision boundary by setting $w'' = 2^{d+1} w'$ and $\theta = 2^{d+1} \theta' - 2^d$. Note that $(w'',\theta)$ also represents $f$ and, in addition, for every $x$, $|w'' \cdot x - \theta| \geq 2^d$. We now define for every $i\in [d]$, $w_i = w''_i + 2^i$.  It is easy to see that $|w \cdot x - w'' \cdot x| \leq 2^d-1$ and therefore $(w,\theta)$ represents $f$. Further, $d$ least significant bits in the binary representation of $w \cdot x$ are exactly equal to $x$ and therefore the second condition is also satisfied.
\end{proof}

Let $(w,\theta)$ be some fixed collision-free representation of a halfspace $f$. By ordering the elements of $I_b^d$ according to the value of $w \cdot x$ we obtain a strict order over the $(2^b)^d$ elements of $I_b^d$. Further any threshold function on this order is of the form $w \cdot x \geq \theta'$ for some $\theta' \in \Z$. We exploit this observation to embed threshold functions into halfspaces. We can then use the well-known fact that $\Ldim$ of threshold functions over an interval of size $2^b$ is $b$.
\begin{theorem}
$\Ldim(\HS_b^d) = (d(d-1)/2+1)\cdot b$.
\end{theorem}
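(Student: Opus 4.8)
The plan is to establish the two matching bounds $\Ldim(\HS_b^d)\ge (d(d-1)/2+1)\,b$ and $\Ldim(\HS_b^d)\le (d(d-1)/2+1)\,b$ separately, both by induction on $d$, and both built on \lemmaref{lem:hs2order} together with the observations it enables: for a collision-free representation $(w,\theta)$ of a halfspace over $I_b^d$, the map $x\mapsto w\cdot x$ is injective on $I_b^d$ and hence induces a strict total order $\prec_w$ on the $2^{bd}$ points; for a \emph{fixed} collision-free weight vector $w$ the halfspaces $\{x: w\cdot x\ge\theta\}$, $\theta\in\Z$, are exactly the $2^{bd}+1$ up-thresholds of $\prec_w$; and, as cited, the Littlestone dimension of the up-thresholds of a chain of $2^b$ elements is exactly $b$. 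The heart of the argument is to count how many essentially independent length-$2^b$ threshold subproblems a single halfspace over $I_b^d$ can encode, and to show that this count is $\binom d2+1$: one subproblem for each unordered pair of coordinates, plus one for the overall threshold, each running at ``resolution'' $b$.

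For the lower bound I would construct a complete mistake tree of depth $(\binom d2+1)b$ inductively, with base case $d=1$ (one-dimensional halfspaces over $I_b$, whose Littlestone dimension is $b$). For the step, given a complete mistake tree $T_{d-1}$ of depth $(\binom{d-1}2+1)b$ for $\HS_b^{d-1}$, the goal is to append $(d-1)b$ further levels using the new coordinate $x_d$: intuitively $x_d$ pairs with each of the $d-1$ old coordinates to create $d-1$ new length-$2^b$ threshold subproblems, and by placing the new weight $w_d$ at a scale separated from the old ones one can make these $d-1$ new binary searches act independently of one another and of everything $T_{d-1}$ already resolved. Concretely, at each leaf of $T_{d-1}$ one chooses a collision-free weight vector whose first $d-1$ coordinates agree (up to scaling) with that leaf's $(d-1)$-dimensional weights and whose $d$-th coordinate is chosen so that on the slice $x_d=0$ the halfspace behaves exactly as that leaf, while on a suitable complementary block of points the new coordinate together with the low-order digits of the old ones carves out a chain of $2^{(d-1)b}$ points whose up-thresholds can be binary-searched by a subtree of depth $(d-1)b$. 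This yields $\Ldim(\HS_b^d)\ge\Ldim(\HS_b^{d-1})+(d-1)b$, which solves to $(\binom d2+1)b$; the scale-separation bookkeeping is routine but somewhat delicate, and is essentially the counting construction of \citet{Muroga:71} reorganized as a mistake-tree construction.

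For the upper bound, which I expect to be the main obstacle, I would again induct on $d$ and prove $\Ldim(\HS_b^d)\le\Ldim(\HS_b^{d-1})+(d-1)b$. Along any root-to-leaf path of a complete mistake tree the queried points and labels are consistent with a single collision-free $(w,\theta)$; restricting to a slice $x_d=v$ turns the relevant queries into up-threshold constraints for $\prec_{w_{<d}}$ (a $\HS_b^{d-1}$-type order) at level $\theta-w_d v$, so the whole path is governed by the order $\prec_{w_{<d}}$ plus the two scalars $w_d,\theta$; one must show that once the $\HS_b^{d-1}$-part of the branching is accounted for, at most $(d-1)b$ further bits of branching survive. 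The difficulty is pinning down the \emph{exact} constant, not merely $\Theta(bd^2)$: naively counting halfspaces over $I_b^d$ via Cover's dichotomy-counting formula gives only $\Ldim\le bd^2+O(\log d)$, overshooting $\binom d2\cdot b\approx bd^2/2$ by a factor of roughly two, and counting collision-free weight vectors via Muroga-type weight bounds is worse still, since some halfspaces over $I_b^d$ require weights as large as $2^{\Omega(bd+d\log d)}$, injecting a spurious $\log d$. Hence the correct constant has to be extracted from the order structure of \lemmaref{lem:hs2order} rather than from brute enumeration: one must argue that a halfspace's worth of sign constraints along a mistake-tree path decomposes into at most $\binom d2+1$ ``phases'', each a threshold problem over a chain of length at most $2^b$ and hence contributing at most $b$ to the depth. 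Carrying out this decomposition — explaining precisely why $\binom d2+1$ chains always suffice — is where the real work lies; the base case and the arithmetic that turns the recursion into $(d(d-1)/2+1)b$ are immediate.
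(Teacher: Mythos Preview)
Your lower-bound construction is essentially the paper's proof: induct on $d$, take the depth-$\bigl(\binom{d-1}{2}+1\bigr)b$ tree $T_{d-1}$, and at each leaf $\ell$ (labeled by some $f_\ell\in\HS_b^{d-1}$) use a collision-free representation $(w',\theta)$ from \lemmaref{lem:hs2order} to append $(d-1)b$ further levels. The paper carries this out concretely by setting $w_d=\theta-z$ for each of the $2^{b(d-1)}$ values $z\in\{w'\cdot y:y\in I_b^{d-1}\}$; then on the slice $x_d=0$ one recovers $f_\ell$, while on the slice $x_d=1$ the family $\{f_{\ell,z}\}_z$ is exactly the class of thresholds on a chain of length $2^{b(d-1)}$, and the binary-search mistake tree for that chain is the appended subtree. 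This is your ``scale-separation'' idea made explicit, with the two slices $x_d\in\{0,1\}$ playing the role of your ``slice $x_d=0$'' and ``complementary block''.

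You should note, however, that the paper proves \emph{only} the lower bound: despite the theorem being stated as an equality, the proof in \sectionref{sec:ldim-hs} consists solely of the mistake-tree construction and never addresses the other direction. (For the paper's purposes---lower-bounding $\SCDP$ via \corollaryref{cor:ldim-to-rand}---only $\Ldim(\HS_b^d)\ge\bigl(\binom d2+1\bigr)b$ is needed, and indeed the introduction refers to the result as a ``lower bound''.) Your diagnosis that the exact upper-bound constant does not fall out of naive halfspace-counting is correct, but that difficulty is simply not taken up by the paper; you should not expect to have to supply an upper-bound argument here.
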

\begin{proof}
We construct a mistake tree $T_d$ over $\HS_b^d$ and $I_b^d$ inductively over the dimension $d$. For $d=1$, $\HS_b^d$ includes all threshold functions on $I_b$ and therefore we define $T_1$ is the complete binary tree representing the binary search on this interval. Note that the depth of this tree is $b$.

Now for $d \geq 2$, let $T_{d-1}$ be the complete mistake tree over $\HS_b^{d-1}$ and $I_b^{d-1}$ of depth $((d-1)(d-2)/2+1)\cdot b$ given by our inductive construction. For every leaf $\ell$ let $f_\ell \in \HS_b^{d-1}$ be the halfspace labeling the leaf. Let $(w',\theta)$ be a collision-free representation  of $f_\ell$ (arbitrarily chosen but fixed for every possible halfspace). Let $Z_\ell = \{ \  w' \cdot y \ | \ y \in I_b^{d-1}\}$. The collision-free property of $(w',\theta)$ implies that $|Z_\ell| = |I_b|^{d-1} = 2^{b (d-1)}$. Let $z_0<z_1\ldots <z_{|Z_\ell|-1}$ denote the elements of $Z_\ell$ ordered by value and for every $j \leq 2^{b (d-1)}$, let $y^j$ denote the point $y$ such that $w' \cdot y = z_j$. For every $z \in Z_\ell$ let $f_{\ell,z}$ be the halfspace over $I_b^d$ defined by $(w,\theta)$ where, $w_d = \theta - z$ and $w_i = w'_i$ for all $i \leq d-1$. Clearly,  $f_{\ell,z_j}$ restricted to the $d-1$ dimensional subcube $I_b^{d-1} \times \{0\}$ (that is points $x$ in $I_b^d$ for which $x_d = 0$) is equivalent to $f_\ell$. When restricted to the $d-1$ dimensional subcube $I_b^{d-1} \times \{1\}$, $f_{\ell,{z_j}}$ is equivalent to $w' \cdot y \geq z_j$. Therefore, up to renaming of the points $y^j \rightarrow j$ and functions $f_{\ell,z_j} \rightarrow t_j$,
$F_{\ell} = \{f_{\ell,z}\}_{z \in Z_\ell}$ restricted to $I_b^{d-1} \times \{1\}$ is identical to the class of linear thresholds on interval $I_{b(d-1)}=\{0,1,\ldots,2^{b(d-1)}\}$. This means that there exists a complete mistake tree $T_\ell$ for $F_{\ell}$ over $I_b^{d-1} \times \{1\}$ of depth $b(d-1)$.

Let $T_d$ be the mistake tree obtained by $(a)$ starting with $T_{d-1}$; $(b)$ replacing points in $I_b^{d-1}$ that label nodes by the corresponding points in $I_b^{d-1} \times \{0\}$; $(c)$ replacing each leaf $\ell$ of $T_{d-1}$ with $T_\ell$. We claim that this is a complete mistake tree for $\HS_b^{d}$ over $I_b^d$ of depth $(d(d-1)/2+1)\cdot b$. The fact that this tree is a complete binary tree of depth $(d(d-1)/2+1)\cdot b$ follows immediately from our construction since  $((d-1)(d-2)/2+1)\cdot b + (d-1)b = (d(d-1)/2+1)\cdot b$. Now let $\ell'$ be a leaf of $T_d$ labeled by some halfspace $f_{\ell,z_j}$. Let $v'$ be a node in $T_d$ labeled by point $x$ such that $\ell'$ is in the subtree of $v'$. If $v'$ is a node derived from node $v$ in $T_{d-1}$ then, by definition, $\ell$ is a leaf in the subtree of $v$ in $T_{d-1}$ and $v$ is labeled by $y$ such that $x = y0$. By our construction, $f_{\ell,z_j}(y0) = f_\ell(y)$ and therefore $f_{\ell,z_j}(x)=1$ if and only if $\ell$ is in the right subtree of $v$ which is equivalent to $\ell'$ being in the right subtree of $v'$.

If $v'$ is a node in $T_\ell$, then $x \in  I_b^{d-1} \times \{1\}$. On points in $I_b^{d-1} \times \{1\}$ the function $f_{\ell,z_j}$ corresponds to the threshold function $t_j$ on the interval $I_{b(d-1)}$ and
consistency with $T_d$ follows from the properties of the binary search tree $T_\ell$ for threshold functions.
\end{proof}

\section{Improving Dependence on $\delta$ in \theoremref{thm:improved}}
\label{app:boost-delta}
\newcommand{\EM}{\mathsf{EM}} %
We now improve the exponential dependence on $1/\delta$ in
\autoref{lem:impure} to prove \autoref{thm:improved}.  We first
introduce the exponential mechanism of \cite{McSherryTalwar:07}.
 For simplicity we only describe its
restriction to the learning setting.  Let $H$ be a hypothesis class
and define the ``quality score function'' $q(S, h) = |\{ (x,y) \in S
\mid h(x) = y \}|$.  The exponential mechanism for $q$ with privacy
$\alpha$ is the following: given an input $S \in (X \times \zo)^n$,
output $h \in H$ according to the distribution $\EM(S)$ given by
$$\Pr[\EM(S) = h] \propto e^{\alpha q(S, h) / 2} .$$
We use the following theorem about the exponential mechanism.
Let $q_{\max}(S) = \max_{h \in H} q(S, h)$.
\begin{theorem}[\citealp{McSherryTalwar:07}]
  \label{thm:em}
  The exponential mechanism is $\alpha$-differentially private.
  Furthermore, for all $S \in (X \times \zo)^n$ and all $t > 0$, it holds that
  $$\Pr[q(S, \EM(S)) < q_{\max}(S) - t] \leq |H| e^{-\alpha t / 2} .$$
\end{theorem}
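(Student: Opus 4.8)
The plan is to prove the two assertions—privacy and utility—separately, each by a short direct computation with the quotient that defines the output distribution of $\EM$. Throughout, write $Z(S) = \sum_{h' \in H} e^{\alpha q(S,h')/2}$, so that by definition $\Pr[\EM(S) = h] = e^{\alpha q(S,h)/2}/Z(S)$, and recall that here $q(S,h) = |\{(x,y)\in S \mid h(x)=y\}|$.

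For privacy, the first step is to record that the quality score has \emph{sensitivity one}: if $S, S'$ are neighboring they differ in exactly one example, and since that example contributes a $\{0,1\}$-valued indicator to the count $q(\cdot, h)$, we get $|q(S,h) - q(S',h)| \leq 1$ for every $h \in H$. Hence $e^{\alpha q(S,h)/2} \leq e^{\alpha/2} e^{\alpha q(S',h)/2}$ for every $h$, and summing over $h' \in H$ gives $Z(S) \leq e^{\alpha/2} Z(S')$ (and, symmetrically, $Z(S') \leq e^{\alpha/2} Z(S)$, so $1/Z(S) \leq e^{\alpha/2}/Z(S')$). Multiplying the two bounds yields $\Pr[\EM(S)=h] \leq e^{\alpha}\,\Pr[\EM(S')=h]$ for each $h$, and summing this over $h$ in an arbitrary target set $T \subseteq H$ gives $\Pr[\EM(S)\in T] \leq e^{\alpha}\Pr[\EM(S')\in T]$, i.e.\ $\alpha$-differential privacy.

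For utility, fix $S$ and let $h^\star \in H$ attain $q_{\max}(S)$. Then
\[
\Pr[q(S,\EM(S)) < q_{\max}(S) - t] = \frac{\sum_{h\,:\,q(S,h) < q_{\max}(S)-t} e^{\alpha q(S,h)/2}}{Z(S)} .
\]
I would bound the numerator above by $|H|\cdot e^{\alpha(q_{\max}(S)-t)/2}$, since it is a sum of at most $|H|$ terms each of which is at most $e^{\alpha(q_{\max}(S)-t)/2}$, and bound the denominator below by retaining only the term for $h^\star$, namely $Z(S) \geq e^{\alpha q_{\max}(S)/2}$. Dividing these two bounds gives exactly $|H|\,e^{-\alpha t/2}$.

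I do not expect a genuine obstacle here: both parts collapse to one-line manipulations of sums of exponentials. The only point that needs (minimal) care is the sensitivity-one claim, which is immediate from the fact that a single labelled example contributes an indicator in $\{0,1\}$ to $q(\cdot,h)$; everything else is the classical argument of \cite{McSherryTalwar:07} specialized to the counting quality score.
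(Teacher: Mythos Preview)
Your proof is correct and is exactly the standard argument due to \cite{McSherryTalwar:07}; the paper itself does not prove this theorem but only cites it as a known result, so there is nothing further to compare.
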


We now finish the proof of \autoref{thm:improved}.
\begin{proof}[Proof of \autoref{thm:improved}]
  We will use the algorithm of \autoref{lem:impure} with $\delta =
  \Theta(1)$ in order to construct a small set of hypotheses from
  which we'll then select one using the exponential mechanism.  More
  precisely:
  \begin{enumerate}
  \item Set $k = \log(2/\delta) / \log(4/3)$. Run the algorithm of
    \autoref{lem:impure} $k$ times independently with fresh samples
    and with $(\frac{\eps}{4}, \frac{1}{4})$-accuracy and $(\alpha,
    \beta)$-differential privacy.  Call the resulting hypotheses $H =
    \{h_1, \ldots, h_k\}$.
  \item Sample $m = \frac{16}{\eps\al} \log \frac{4k}{\delta}$
    additional samples, call this set $S$.  Use the exponential
    mechanism to output $h \in H$.
  \end{enumerate}

  The mechanism is $(\alpha, \beta)$-differentially private because
  samples used to produce $h_1, \ldots, h_k$ are learned with
  $(\alpha, \beta)$ differential privacy (notice each sample can only
  affect one of the $h_i$, therefore the privacy loss does not add up
  when considering the set of all hypotheses).  Also, the samples used
  to pick $h \in H$ are used via the exponential mechanism, which is
  also $\alpha$-differentially private.

  To analyze the accuracy, observe that since each of the $h_i$ is
  produced using an $(\frac{\eps}{4}, \frac{1}{4})$-accurate learner,
  therefore by independence of the executions and our choice of $k$,
  it holds with probability $\geq 1-(1-\frac{1}{4})^k \geq 1-
  \frac{\delta}{2}$ that $H$ contains some $h$ that has error at most
  $\frac{\eps}{4}$.

  Next, for any $h \in H$ with $\Pr_{x \sim \calD}[f(x) \neq h(x)] >
  \eps$, observe that by a standard multiplicative Chernoff bound with
  probability $1 - e^{-\eps m/12}$ over the choice of $S$ it holds
  that
  \begin{equation}
    \label{eq:anposbiha}
    \Pr_{x \drawnr S}[f(x) \neq h(x)] > \tfrac{1}{2} \Pr_{x \sim
    \calD}[f(x) \neq h(x)] > \eps / 2 .
  \end{equation}
  Similarly, for any $h \in H$ with $\Pr_{x \sim \calD}[f(x) \neq
  h(x)] \leq \eps / 4$, with probability $1 - e^{-\eps m / 8}$ over the
  choice of $S$ it holds that:
  \begin{equation}
    \label{eq:anposbiha2}
    \Pr_{x \drawnr S}[f(x) \neq h(x)] < \tfrac{3}{2} \Pr_{x \sim
    \calD}[f(x) \neq h(x)] \leq 3\eps/8 .
  \end{equation}
  Therefore by a union bound, it holds with probability $1 - k
  e^{-\eps m / 12} > 1 - \frac{\delta}{4}$ that for all $h \in H$ with
  error greater than $\eps$, it holds that $\Pr_{x \drawnr S}[f(x)
  \neq h(x)] > \eps / 2$, and for all $h \in H$ with error at most
  $\eps/4$, it holds that $\Pr_{x \drawnr S}[f(x) \neq h(x)] < 3\eps/8$.

  This implies that with probability $1 - 3\delta/4$ over the
  probability of computing $H$ and sampling $S$, it suffices to output
  some $h \in H$ such that $q(S, h) \geq \max_{h' \in H} q(S, h') -
  \eps/8$.  This is because we are in the case where $H$ contains a
  hypothesis with error $\leq \eps/4$, and therefore by
  \autoref{eq:anposbiha2} it holds that $\max_{h' \in H} q(S, h') >
  |S| ( 1 - 3\eps / 8)$ and therefore any such $h$ will have error
  $\leq \eps/2$ over $S$.  By \autoref{eq:anposbiha}, we deduce that
  any such $h$ must have error $\leq \eps$ over $\calD$.

  By \autoref{thm:em}, the probability that the exponential mechanism
  outputs such a $h$ is at least $1 - k e^{-\alpha \eps m/16} \geq 1 -
  \delta / 4$.  Therefore the overall probability of outputting an
  $\eps$-good hypothesis is at least $1 - \delta$.
\end{proof}

\end{document}